\documentclass[letterpaper, 10 pt, conference]{IEEEconf}  % Comment this line out if you need a4paper
\usepackage[utf8]{inputenc}

\IEEEoverridecommandlockouts    
\usepackage{cite}
\usepackage{amsmath,amssymb,amsfonts}
\usepackage{graphicx}
\usepackage{bm}

\usepackage{algorithm}
\usepackage{algpseudocode}
\usepackage{amsmath,amssymb}
\usepackage{array}
\usepackage{makecell}

\usepackage{booktabs}
\usepackage{mathtools}
\usepackage{textcomp}
\usepackage{amssymb}
\usepackage{amsmath}
\usepackage{graphicx}
\usepackage{xcolor}
\usepackage{makecell}
\usepackage{comment}
\usepackage{todonotes}

\usepackage{cite}
\usepackage{graphicx}
\usepackage{subcaption}

\usepackage{xcolor}
\definecolor{light-blue}{rgb}{0.3,0.5,0.8}
\usepackage[colorlinks=true, linkcolor=cyan, citecolor=cyan, filecolor=magenta, urlcolor=cyan]{hyperref}
\usepackage{cleveref}

\title{\bf Closed-Form Characterization of Constrained Double-Integrator Optimal Control}

\author{Filippos N. Tzortzoglou, \IEEEmembership{Student Member, IEEE,} Logan E. Beaver \IEEEmembership{Member, IEEE,}\\ and Andreas A. Malikopoulos,
\IEEEmembership{Senior Member, IEEE}
\thanks{This research was supported in part by NSF under Grants CNS-2401007, CMMI-2348381, IIS-2415478, in part by MathWorks.}\thanks{Filippos N. Tzortzoglou is with the Civil and Environmental Engineering Department, Cornell University} \thanks{Logan Beaver is with the Mechanical \& Aerospace Engineering Department of Old Dominion University} \thanks{Andreas A. Malikopoulos is with the Applied Mathematics, Systems Engineering, Mechanical Engineering, Electrical \& Computer Engineering, and School of Civil \& Environmental Engineering, Cornell University, Ithaca, NY, USA. (email: \texttt{amaliko@cornell.edu})}}

\usepackage{amsthm}

\newtheorem{lemma}{Lemma} 
\newtheorem{theorem}{Theorem}

\newtheorem{problem}{Problem}
\newtheorem{remark}{Remark}

\newtheorem{assumption}{Assumption}

\begin{document}

\maketitle

\begin{abstract}
We consider the energy-optimal control problem for double-integrator systems subject to state and control constraints, with fixed terminal time and free terminal speed. When the constraints become active, the optimal trajectory consists of a combination of bang, unconstrained, and coast arcs, whose switching instants must be computed explicitly. In this paper, we derive closed-form expressions for the switching times of all admissible profiles, including both constrained and unconstrained arcs, reducing the computation in each case to explicit algebraic equations. In contrast to prior work, we classify all possible combinations of arcs, including special cases, and provide the specific conditions under which each case arises.
%at most a single quadratic equation, without the need to solve $4\times 4$ systems. 
Furthermore, we prove that when the initial unconstrained trajectory violates both speed and control constraints, the optimal solution follows a predetermined bang--affine--coast profile, enabling direct identification of the optimal trajectory without intermediate feasibility checks.
\end{abstract}  

\section{Introduction}
\label{sec:introduction}
Optimal control problems with state and control constraints arise naturally in many engineering systems, where physical limitations must be explicitly enforced \cite{bryson1969applied}. Properly accounting for these constraints is essential for both theoretical development and practical implementation. A representative example is an autonomous vehicle whose motion is governed by position and speed states, with acceleration/deceleration as the control input \cite{Malikopoulos2020}. In this setting, state constraints may enforce speed limits, while control constraints reflect bounds on acceleration and deceleration \cite{malikopoulos2019ACC}. Such formulations are fundamental in transportation and robotics applications, where safety and actuator limitations play a critical role.
% established a systematic framework for constrained optimal control through a wide range of engineering examples. Jacobson et al. ~\cite{jacobson1971} later derived sharper necessary conditions at the transition points between constrained and unconstrained arcs, refining the earlier results of~\cite{bryson1963optimal}. Bonnans and Hermant~\cite{bonnans2009} extended this analysis to vector-valued state constraints and controls, establishing second-order optimality conditions and characterizing the well-posedness of the associated shooting algorithm. Their framework, however, addresses general nonlinear systems and does not yield explicit closed-form expressions for the switching instants. Hartl et al.~\cite{hartl1995survey} later provided one of the most comprehensive surveys of maximum principles for problems with state constraints, while Evans~\cite{evans2024introduction} offered a rigorous modern introduction to optimal control theory.

The foundations of constrained optimal control trace back to the late 1960s~\cite{bryson1963optimal,bryson1969applied,jacobson1971,hartl1995survey,bonnans2009,evans2024introduction}, where the need to systematically handle state and control limitations first emerged. From an application perspective, early studies in aerospace engineering addressed optimal flight path problems with state-variable inequality constraints~\cite{speyer1969separate}. Since then, state- and control-constrained optimal control has been widely applied across several domains, including the guidance of autonomous underwater vehicles~\cite{chertovskih2020indirect}, locomotion of micro-robotic swimmers~\cite{wiezel2016using}, motion planning in robotics~\cite{beaver2023graph}, and efficiency improvements in transportation systems, particularly in the domain of connected and automated vehicles (CAVs)~\cite{Malikopoulos2020,meng2020eco,Meng2,xu2022general,bang2023optimal,Beaver2021DifferentialTime,tzortzoglou2026toward}.

In transportation applications, an important class of problems involves systems with double-integrator dynamics, where the objective is typically to minimize control effort for a CAV \cite{Malikopoulos2020,chalaki2021CSM}. In this setting, position and speed are the state variables, while acceleration serves as the control input. In \cite{locatelli2017}, a comprehensive study of optimal control for double-integrator systems was presented; however, it does not address the joint activation of both state and control constraints under an objective that minimizes control effort. In optimal control problems, when state and/or control constraints become active along the trajectory,  % with double-integrator dynamics, where the objective is to minimize control effort, 
the optimal solution is a %becomes 
piecewise function, consisting of a combination of constrained and unconstrained arcs. The switching instants between arcs must then be computed explicitly.

The authors in \cite{meng2020eco} developed a methodology to characterize optimal solutions for state- and control-constrained problems in the context of CAVs approaching signalized intersections. A key contribution of their work is the derivation of real-time analytical solutions, distinguishing their approach from prior methods based on numerical computation. For the fixed terminal time problem, however, they enumerate different cases (see Table II in \cite{meng2020eco}) without explicitly defining the resulting trajectory from the initial conditions. Also, identifying the switching instants between arcs in certain sub-cases requires solving coupled algebraic equations and, in some cases, cubic equations. The latter arise specifically when bang–affine–coast profiles occur. 

Later, in \cite{mahbub2020Automatica-2}, the authors addressed some of the limitations of \cite{meng2020eco}, focusing on CAVs operating at signal-free intersections. Specifically, they established a priori conditions for identifying constraint activation cases, and resulting control profiles. However, in both~\cite{meng2020eco} and~\cite{mahbub2020Automatica-2}, determining whether a single-constraint profile remains feasible requires sequential feasibility checks. In \cite{mahbub2020Automatica-2}, the trajectory is first classified as either bang–affine or affine–coast, depending on whether the unconstrained solution violates the control input or the speed limit, respectively. The resulting profile is then re-examined to check whether the introduced constrained arc activates the remaining constraint, in which case the solution escalates to a bang–affine–coast profile.

In this paper, we seek to contribute to this space. We first show that when the unconstrained trajectory violates both the state and control constraints, the optimal profile is directly of bang–affine–coast type, eliminating the need for sequential feasibility checks. In addition, in contrast to \cite{mahbub2020Automatica-2}, we provide conditions for all special cases that may arise, such as purely bang or bang–coast profiles. More importantly, we demonstrate that the switching instants between arcs for all admissible profiles reduce to solving at most a single quadratic equation, whose solution is an explicit function of the boundary data, with no intermediate constants of integration required.

The remainder of the paper is organized as follows. In Section II, we introduce the problem
formulation. In Section III, we classify the admissible control profiles and present the conditions
for their identification. In Section IV, we derive the closed-form analytical trajectories for each
profile.  Finally,
we offer concluding remarks and directions for future work in Section V.

\section{Problem Formulation}\label{sec:problem}

Consider an autonomous agent operating on a predefined path governed by double-integrator dynamics
\begin{align}\label{eq:dynamics}
    \dot{p}(t) &= v(t),  \\
    \dot{v}(t) &= u(t),\nonumber
\end{align}
where $p(t)\in\mathbb{R}$ is the position, $v(t)\in\mathbb{R}$ is the speed, and $u(t)\in\mathbb{R}$ is the acceleration (control input). The speed and control inputs are subject to the constraints
\begin{equation}\label{eq:constraints}
    v_{\min} \leq v(t) \leq v_{\max}, \qquad
    u_{\min} \leq u(t) \leq u_{\max},
\end{equation}
with $u_{\min} < 0 < u_{\max}$ and $v_{\min} < v_{\max}$.

Given an initial time $t^0$ with initial conditions $p(t^0)=p^0$, $v(t^0)=v^0$, and a given terminal position $p(T)=p^T=L$ at a fixed terminal time~$T$, we consider the following energy-minimizing optimal control problem. % without considering \eqref{eq:constraints} defined as
\begin{problem} \label{prb:ocp}
Determine the control trajectory $u(t)$ that minimizes,
\begin{align}
\label{eq:ocp}
\min_{u(\cdot)} \quad & \int_{t^0}^{T} u^2(t)\,dt \\
\text{s.t.} \quad 
& \eqref{eq:dynamics}, \eqref{eq:constraints},  \nonumber \\
& p(t^0)=p^0,\; v(t^0)=v^0, \nonumber \\
& p(T)=L.\nonumber 
\end{align}
\end{problem}
Next, we provide the optimal unconstrained trajectory for this problem. Note that although the process for identifying the unconstrained trajectory is well defined (see, e.g., \cite{Malikopoulos2020, mahbub2020Automatica-2}), here we prefer to give an alternative representation using only one coefficient $\alpha$ that will facilitate our exposition throughout the paper. In our exposition, we suppress dependence on $t$ for some variables for readability.

\begin{lemma}\label{unconstrained_solution} 
Without loss of generality, let $t^0=0$ and $p^0=0$. The optimal control input
$u(t)$, $t\in[0,T]$, for the unconstrained solution of Problem 1
is given by
\[
u(t)=\alpha (t-T),
\]
where
\[
\alpha=\frac{3(v^0T-L)}{T^3}.
\]
\end{lemma}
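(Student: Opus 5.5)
We would apply Pontryagin's minimum principle to Problem~\ref{prb:ocp} with the constraints \eqref{eq:constraints} dropped. Take the Hamiltonian $H = u^2 + \lambda_p v + \lambda_v u$. The costate equations give $\dot\lambda_p = -\partial H/\partial p = 0$ and $\dot\lambda_v = -\partial H/\partial v = -\lambda_p$. Hence $\lambda_p$ is a constant and $\lambda_v$ is affine in $t$. Since $u$ is unconstrained and $H$ is strictly convex in $u$, stationarity $\partial H/\partial u = 2u + \lambda_v = 0$ yields $u = -\lambda_v/2$. The optimal control is therefore affine, $u(t) = a t + b$ for constants $a,b$. This is the standard result cited from \cite{Malikopoulos2020, mahbub2020Automatica-2}, so we take it as the starting point.

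Next we use the free terminal speed. Because $v(T)$ is unspecified and the cost has no terminal term, transversality gives $\lambda_v(T) = 0$. Therefore $u(T) = 0$, so $b = -aT$ and $u(t) = \alpha(t-T)$ with $\alpha := a$. This is the step that justifies the single-coefficient representation, and the only subtle point. We must make sure the condition is imposed on $\lambda_v$, not on $\lambda_p$: $p(T)=L$ is fixed, so $\lambda_p$ is free.

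Finally we determine $\alpha$ from the boundary data, with $t^0 = 0$ and $p^0 = 0$. Integrating \eqref{eq:dynamics} once gives
\[
v(t) = v^0 + \alpha\left(\tfrac{t^2}{2} - Tt\right),
\]
and integrating again gives
\[
p(t) = v^0 t + \alpha\left(\tfrac{t^3}{6} - \tfrac{Tt^2}{2}\right).
\]
Imposing $p(T) = L$ gives $L = v^0 T - \alpha T^3/3$, so $\alpha = 3(v^0T - L)/T^3$, as claimed.

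For optimality, note the problem is a convex quadratic cost with linear dynamics and affine constraints. The PMP conditions are therefore also sufficient, and the resulting solution is the unique minimizer. The calculations are routine; the main care point is the transversality argument in the second paragraph.
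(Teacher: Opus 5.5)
Your proposal is correct and follows the paper's proof essentially step for step: the same Hamiltonian and costate equations, stationarity giving an affine control, transversality $\lambda_v(T)=0$ forcing $u(T)=0$, and two integrations with $p(T)=L$ to solve for $\alpha$. Your closing convexity remark, which upgrades the necessary conditions to sufficiency and uniqueness, is a valid addition that the paper omits.
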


\begin{proof}
The Hamiltonian of Problem 1 is
\begin{equation}
H(u,p,v,\lambda_1,\lambda_2)=u^2+\lambda_1 v+\lambda_2 u,
\end{equation}
where $\lambda_1, \lambda_2$ are the time-varying costates.
For optimality, the costates must satisfy
\[
\dot{\lambda}_1=-\frac{\partial H}{\partial p}=0,
\qquad
\dot{\lambda}_2=-\frac{\partial H}{\partial v}=-\lambda_1.
\]
Thus,
\begin{equation}
\lambda_1=c_1,
\qquad
\lambda_2(t)=-c_1 t+c_2.
\end{equation}
From the stationarity condition,
\begin{equation}
\frac{\partial H}{\partial u}=2u+\lambda_2=0,
\end{equation}
we obtain
\begin{equation}
u(t)=-\frac{\lambda_2(t)}{2}
=\frac{c_1 t-c_2}{2}
=at+b,
\end{equation}
for some constants of integration $a,b$. Since the terminal speed is free, the transversality condition gives
$\lambda_2(T)=0.$
Hence, $u(T)=0,$ and therefore $u(t)=a(t-T)$. % for some constant $a$.

Let $\tau\in[0,T]$. Integrating $u(\tau)$ yields
\[
v(\tau)=v^0+\int_0^\tau a(s-T)\,ds
= v^0+\frac{a}{2}\tau^2-aT\tau.
\]
Integrating once more, we obtain
\begin{equation}
p(\tau)=0+\int_0^\tau v(s)\,ds
= 0 +v^0\tau+\frac{a}{6}\tau^3-\frac{aT}{2}\tau^2,
\end{equation}
which implies,
\[
L=v^0T+\frac{a}{6}T^3-\frac{a}{2}T^3
= v^0T-\frac{a}{3}T^3.
\]
Solving for $a$ gives
\[
a=\frac{3(v^0T-L)}{T^3}.
\]
Setting $\alpha=a$ completes the proof.
\end{proof}

We have identified the optimal control profile for the unconstrained version of Problem 1. However, our goal is to develop a framework that accounts for both the state and control constraints in \eqref{eq:constraints}. In \cite{mahbub2020Automatica-2}, it was shown that, when the final time $T$ is feasible, that is, reachable under the vehicle’s physical limits, the inclusion of state and control constraints leads to a finite number of possible control profile structures. In the next section, we discuss these profiles and their classification based on the vehicle’s initial state. We also provide conditions that lead to special control profiles not discussed in \cite{mahbub2020Automatica-2}.

\section{Optimal Control Profiles}\label{sec:profiles}
In this section, we discuss the different profiles our solution can exhibit based on the initial conditions of the autonomous agent. Before proceeding, note that, for a free terminal speed, the transversality condition yields an additional boundary condition %implies
%$$\lambda_2(T)=0 \;\Rightarrow\; 
$$u(T)=0,$$
which we derive in the sequel.
Moreover, by Lemma~\ref{unconstrained_solution}, the unconstrained optimal control is linear. Therefore, over the interval $[t^0,T]$, the acceleration cannot change sign.

\begin{assumption}\label{as:feasibility}
%Throughout the paper, we assume that 
The prescribed terminal time $T$ is \emph{feasible}; namely, there exists at least one admissible trajectory satisfying \eqref{eq:dynamics}--\eqref{eq:constraints} and the boundary condition. % that transfers the vehicle from $p^0$ to $p^T$ within the time interval $[t^0,T]$.
\end{assumption}

\begin{assumption}\label{as:symmetry}
%Throughout the paper, 
We restrict attention to the case in which the acceleration is linearly decreasing in time, equivalently, the initial acceleration is positive. As shown in \cite[Lemma 2]{mahbub2020Automatica-2}, this is equivalent to assuming that the initial speed satisfies
$$
v(t^0) < \frac{L}{T},
\qquad \text{where } L = p^T - p^0.
$$
The complementary case, in which the acceleration is linearly increasing in time, follows by symmetry and is therefore omitted.
\end{assumption}
\label{thm:six_profiles}
Based on the results in \cite{mahbub2020Automatica-2} and \cite{meng2020eco} and under Assumption~\ref{as:feasibility}, every optimal solution to Problem 1 with free terminal speed exhibits a control profile that is a concatenation of at most three arcs drawn from the set $\{\mathrm{Bang},\, \mathrm{Unconstrained},\, \mathrm{Coast}\}$, where
\begin{itemize}
    \item[\textbullet] a \emph{Bang} arc has $u = u_{\max}$ ,
    \item[\textbullet] an \emph{Unconstrained} arc has $u(t)$ affine in $t$, and
    \item[\textbullet] a \emph{Coast} arc has $u = 0$ with $v = v_{\max}.$ 
\end{itemize}
Specifically, the only admissible profiles are:
\begin{enumerate}
    \item[\textnormal{(P1)}] Unconstrained,
    \item[\textnormal{(P2)}] Bang\,$\to$\,Unconstrained,
    \item[\textnormal{(P3)}] Unconstrained\,$\to$\,Coast,
    \item[\textnormal{(P4)}] Bang\,$\to$\,Coast,
    \item[\textnormal{(P5)}] Bang\,$\to$\,Unconstrained\,$\to$\,Coast,
    \item[\textnormal{(P6)}] Bang.
\end{enumerate}
The relative control profiles are presented in Fig. \ref{fig:control_profiles}.

For profiles ending in unconstrained arcs (P1, P2), the transversality conditions imply,
\begin{equation}
    \lambda_2(T)=0 \implies u(T)=0.
\end{equation}
Similarly, for profiles ending on coast arcs (P3, P4, P5), the condition $u(T)=0$ is implied by the active constraint.
The bang profile is pathological and arises under a measure-zero set of conditions, which we describe in the following section.
Note that by definition, the bang profile $u=u_{\max}$ cannot change sign, thus it is consistent with prior discussion.

\begin{figure}
    \centering
    \includegraphics[width=1.01\linewidth]{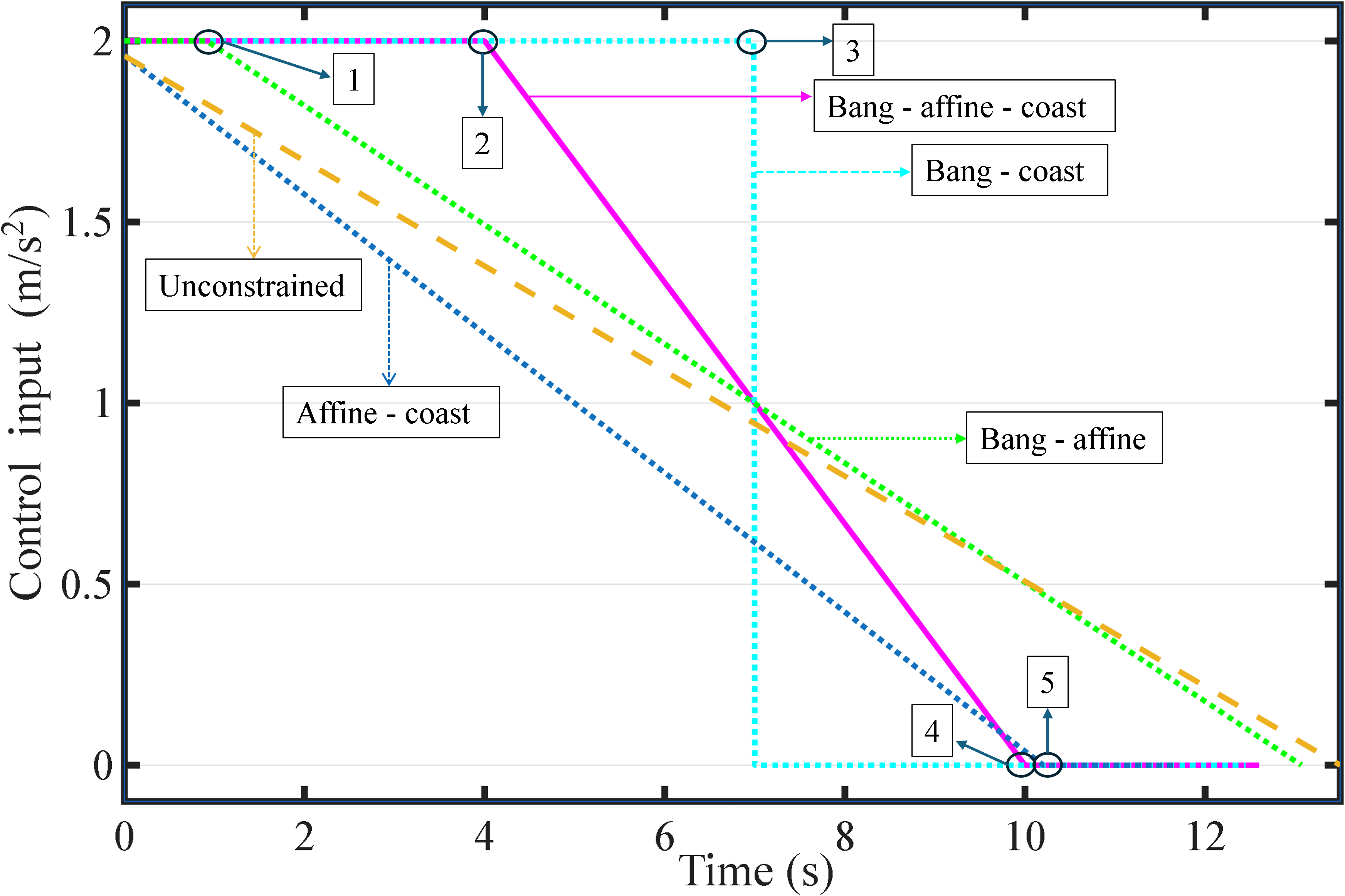}
    \caption{Different control profile combinations}
    \label{fig:control_profiles}
    \vspace{-12pt}
\end{figure}

\begin{table}[t]
\caption{Conditions for constraint activation in the accelerating case ($v^0 < L/T$, i.e., $a < 0$). A$=$affine, B--A $=$ bang-affine, A--C $=$ affine-coast, B--A--C $=$ bang-affine-coast. $\tau_c^*$: control constraint junction point;\; $\tau_s^*$: state constraint junction point. }
\label{tab:conditions}
\centering
\scriptsize
\renewcommand{\arraystretch}{1.0}
\setlength{\tabcolsep}{4pt}
\begin{tabular}{|c|c|l|}
\hline
\textbf{\#} & \textbf{Profile} & \textbf{Condition} \\ \hline
& & \\[-6pt]
1 & A &
$\displaystyle T > \frac{3L}{v^0 + 2v_{\max}} \;\text{\&}\; T > \frac{-3v^0 + \sqrt{9(v^0)^2 + 12\,u_{\max} L}}{2\,u_{\max}}$
\\[10pt] \hline
& & \\[-6pt]
2 & B--A &
$\begin{aligned}[t]
&\displaystyle\ T > \frac{3L}{v^0 + 2v_{\max}} \;\text{\&}\; T \leq \frac{-3v^0 + \sqrt{9(v^0)^2 + 12\,u_{\max} L}}{2\,u_{\max}} \; \\[6pt]
&\displaystyle\;\text{\&}\;\quad T < \tau_c^* - \frac{2\bigl(v_i(\tau_c^*) - v_{\max}\bigr)}{u_{\max}}
\end{aligned}$
\\[35pt] \hline
& & \\[-6pt]
3 & A--C &
$\begin{aligned}[t]
&\displaystyle T \leq \frac{3L}{v^0 + 2v_{\max}}\;\text{\&}\; T > \frac{-3v^0 + \sqrt{9(v^0)^2 + 12\,u_{\max} L}}{2\,u_{\max}} \\[6pt]
&\displaystyle\;\text{\&}\; \tau_s^* > \frac{-3v^0 + \sqrt{9(v^0)^2 + 12\,u_{\max}\bigl(p^*(\tau_s^*) - p(t^0)\bigr)}}{2\,u_{\max}}
\end{aligned}$\\[40pt] \hline
& & \\[-6pt]
4 & B--A--C &
$\begin{aligned}[t]
&\displaystyle T > \frac{3L}{v^0 + 2v_{\max}} \;\text{\&}\; T \leq \frac{-3v^0 + \sqrt{9(v^0)^2 + 12\,u_{\max} L}}{2\,u_{\max}} \\[6pt]
&\displaystyle\;\text{\&}\; T \geq \tau_c^* - \frac{2\bigl(v_i(\tau_c^*) - v_{\max}\bigr)}{u_{\max}}
\end{aligned}$
\\[35pt] \hline
& & \\[-6pt]
5 & B--A--C &
$\begin{aligned}[t]
&\displaystyle T \leq \frac{3L}{v^0 + 2v_{\max}} \;\text{\&}\; T > \frac{-3v^0 + \sqrt{9(v^0)^2 + 12\,u_{\max} L}}{2\,u_{\max}} \\[6pt]
&\displaystyle\;\text{\&}\; \tau_s^* \leq \frac{-3v^0 + \sqrt{9(v^0)^2 + 12\,u_{\max}\bigl(p^*(\tau_s^*) - p(t^0)\bigr)}}{2\,u_{\max}}
\end{aligned}$
\\[40pt] \hline
& & \\[-3pt]
6 & B--A--C &
$\displaystyle T \leq \frac{3L}{v^0 + 2v_{\max}} \;\text{\&}\; T \leq \frac{-3v^0 + \sqrt{9(v^0)^2 + 12\,u_{\max} L}}{2\,u_{\max}}$
\\[10pt] \hline
& & \\[-3pt]
7 & B &
$\begin{aligned}[t]
T = \frac{\sqrt{(v^0)^2 + 2u_{\max}L}}{u_{\max}} - \frac{v^0}{u_{\max}} \leq \frac{v_{\max} - v^0}{u_{\max}}
\end{aligned}$
\\[10pt] \hline
& & \\[-3pt]
8 & B-C &
$\begin{aligned}[t]
\tau_c = \tau_s &= \frac{v_{\max} - v^0}{u_{\max}} \\
3(v_{\max}-v^0)^2 &= 6 u_{\max}(T v_{\max} - L)
\end{aligned}$
\\[10pt] \hline
\end{tabular}
\vspace{4pt}
\raggedright
\end{table}

In \cite{mahbub2020Automatica-2}, conditions on the initial state of the agent were derived to determine the resulting control trajectory profile. Next, we review these conditions.

\subsection{Classification of control profiles}

Table~\ref{tab:conditions} summarizes the conditions under which different constraint activation cases arise in the accelerating case, i.e., when $v^0 < L/T$ and $a < 0$.
The conditions are organized in terms of two threshold quantities derived from the boundary conditions of Problem 1: the state constraint threshold, 
\begin{align}\label{state constraint threshold}
T \leq 3L/(v^0 + 2\,v_{\max}),\end{align} 
from Theorem~2(i) in~\cite{mahbub2020Automatica-2}, and the control constraint threshold, \begin{align} \label{control constraint threshold}
T \leq \bigl(-3v^0 + \sqrt{9(v^0)^2 + 12\,u_{\max}L}\bigr)/(2\,u_{\max}),\end{align}  from Theorem~3(i) in~\cite{mahbub2020Automatica-2}.

Note that the conditions in rows 6–8 of Table~\ref{tab:conditions} were not discussed in \cite{mahbub2020Automatica-2}. Next, we prove why these conditions result in the associated control profiles. We begin by formally stating the conditions under which the optimal trajectory given in Lemma \ref{unconstrained_solution} does not activate any constraint.

\begin{lemma}[Unconstrained Profile \cite{mahbub2020Automatica-2}] \label{lem:unconstrained}
Let $u_i^*(t) = a(t-T)$, $t \in [t^0,\, T]$, be the optimal control input for the unconstrained solution of~Problem 1 with $a < 0$.
If
\begin{equation}\label{eq:unc_cond}
    \begin{cases}
    T &> \frac{3L}{v^0 + 2\,v_{\max}} \\
    T &> \frac{-3v^0 + \sqrt{9(v^0)^2 + 12\,u_{\max}\,L}}{2\,u_{\max}},
    \end{cases}
\end{equation}
then none of the state and control constraints become active in $[t^0,\, T]$.
\end{lemma}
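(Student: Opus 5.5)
Working from Lemma~\ref{unconstrained_solution} with $t^0=0$, $p^0=0$, we have $u_i^*(t)=a(t-T)$ with $a=3(v^0T-L)/T^3<0$. The proof reduces to a monotonicity argument followed by two scalar inequalities. Since $a<0$, the control $u_i^*$ is nonnegative and strictly decreasing on $[0,T]$, with maximum $u_i^*(0)=-aT$ and $u_i^*(T)=0$. The speed $v(t)=v^0+\tfrac{a}{2}t^2-aTt$ has derivative $u_i^*(t)\ge 0$, so it is nondecreasing: its minimum is $v^0$ and its maximum is $v(T)=v^0-\tfrac{a}{2}T^2$. Assuming $v^0\ge v_{\min}$ and $u_{\min}<0\le u$ (both implicit in feasibility and the standing assumptions), the lower constraints can never be active. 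It therefore suffices to show $v(T)<v_{\max}$ and $u_i^*(0)<u_{\max}$.

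\textbf{Speed constraint.} Substituting $a$ gives
\[
v(T)=v^0-\frac{3(v^0T-L)}{2T}=\frac{3L}{2T}-\frac{v^0}{2}.
\]
Hence $v(T)<v_{\max}$ is equivalent to $3L<T(v^0+2v_{\max})$. This is exactly the first inequality in \eqref{eq:unc_cond}, after multiplying through by $v^0+2v_{\max}$. That factor is positive because $v_{\max}>v^0$, which in turn holds since the unconstrained speed must exceed $v^0$ and stay below $v_{\max}$ for a non-pathological feasible problem. I will note this sign condition explicitly.

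\textbf{Control constraint.} We have $u_i^*(0)=-aT=3(L-v^0T)/T^2$. The condition $u_i^*(0)<u_{\max}$ is equivalent to
\[
g(T):=u_{\max}T^2+3v^0T-3L>0.
\]
The function $g$ is a convex quadratic in $T$ with $g(0)=-3L<0$, using $L>0$, which follows from $v^0<L/T$ when $v^0\ge0$; the case $v^0<0$ I would handle by the same root comparison. So $g$ has exactly one positive root,
\[
T^\star=\frac{-3v^0+\sqrt{9(v^0)^2+12u_{\max}L}}{2u_{\max}},
\]
and $g(T)>0$ for $T>T^\star$. This is precisely the second inequality in \eqref{eq:unc_cond}.

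\textbf{Main obstacle.} The main difficulty is bookkeeping of signs. I need $L>0$ so that the discriminant root is the relevant one, and $v^0+2v_{\max}>0$ to justify the division. I would derive both from Assumption~\ref{as:symmetry} together with $v_{\min}\le v^0<v_{\max}$. Combining the two constraint bounds with the monotonicity shows no constraint is active on $[0,T]$.
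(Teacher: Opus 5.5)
The paper gives no argument of its own for this lemma. It imports it from \cite{mahbub2020Automatica-2}, reading the two inequalities as negations of the speed- and control-threshold results there. Your direct reduction (monotonicity, then checking $v(T)$ and $u_i^*(0)$) is more self-contained, and your formulas $v(T)=\frac{3L}{2T}-\frac{v^0}{2}$ and $u_i^*(0)=3(L-v^0T)/T^2$ are correct.

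The control half needs no sign information. $T^\star$ is the larger root of the upward parabola $g$, so $g(T)>0$ for every $T>T^\star$, whatever the sign of $L$. Separately, $v^0<v_{\max}$ should come from feasibility, $v^0<L/T\le v_{\max}$. Your justification via the unconstrained speed staying below $v_{\max}$ is circular, since that is what you are proving.

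The genuine gap is the sign of $v^0+2v_{\max}$.
\begin{itemize}
\item It does not follow from $v_{\max}>v^0$, nor from Assumption~\ref{as:symmetry} together with $v_{\min}\le v^0<v_{\max}$.
\item It is not mere bookkeeping. If $v^0+2v_{\max}<0$, multiplying the first inequality in \eqref{eq:unc_cond} by this negative factor gives $3L>T(v^0+2v_{\max})$. That is exactly $v(T)>v_{\max}$, the opposite of what you need.
\end{itemize}

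Concrete instance: $v^0=-10$, $v_{\max}=4$, $v_{\min}=-20$, $u_{\max}=-u_{\min}=1000$, $T=1$, $L=3$.
\begin{itemize}
\item Assumption~\ref{as:symmetry} holds, since $-10<3$.
\item The problem is feasible: accelerate at $u_{\max}$ to a speed of about $3.09$ and hold it.
\item Both inequalities in \eqref{eq:unc_cond} hold: $1>-4.5$ and $1>T^\star\approx 0.11$.
\item Yet $a=-39$ and $v(T)=9.5>v_{\max}$.
\end{itemize}

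So this step cannot be closed from the stated hypotheses; you must add one. For example, $v^0\ge 0$ is natural for vehicles, e.g.\ with $v_{\min}\ge 0$. It gives $v^0+2v_{\max}>3v^0\ge 0$, and with that your argument is complete.
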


%\begin{proof}
%By Theorem~1 in~\cite{mahbub2020Automatica-2}, $a < 0$ implies that $v_{\min} - v_i(t) \leq 0$ and $u_{\min} - u_i(t) \leq 0$ cannot become active in $[t^0,\, T]$.
%The first condition in~\eqref{eq:unc_cond} is the negation of Theorem~2(i) in~\cite{mahbub2020Automatica-2}, which guarantees that $v_i(t) - v_{\max} \leq 0$ is not violated.
%The second condition is the negation of Theorem~3(i) in~\cite{mahbub2020Automatica-2}, which guarantees that $u_i(t) - u_{\max} \leq 0$ is not violated.
%Since no constraint becomes active, the unconstrained solution is optimal.
%\end{proof}

This result corresponds to the first row of Table~\ref{tab:conditions}, where the unconstrained (affine) solution satisfies all constraints, i.e., when $T$ exceeds both thresholds. A visualization of this control profile is shown in Fig.~\ref{fig:control_profiles} (yellow dotted line, labeled \textit{unconstrained}).

Next, we consider the impact of the constraint thresholds \eqref{control constraint threshold}, \eqref{state constraint threshold} on the trajectory.

\begin{lemma}[Constraint Activation \cite{mahbub2020Automatica-2}] \label{lma:threshold}
    If only the state threshold \eqref{state constraint threshold} is violated, then the trajectory will include a coast arc.
    If only the control threshold \eqref{control constraint threshold} is violated, then the trajectory will include a bang arc.
\end{lemma}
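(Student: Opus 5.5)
The plan is to work directly with the unconstrained trajectory from Lemma~\ref{unconstrained_solution}, $u(t)=\alpha(t-T)$ with $\alpha<0$ under Assumption~\ref{as:symmetry}. First I show that each threshold is exactly the condition for the corresponding constraint to be violated by this trajectory. Then I argue that a violated constraint must appear as an active arc in the optimal solution.

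\emph{Control threshold.} Since $\alpha<0$, $u$ is positive and decreasing on $[0,T)$, so its maximum is $u(0)=-\alpha T=3(L-v^0T)/T^2$. The condition $u(0)>u_{\max}$ becomes
\[
u_{\max}T^2+3v^0T-3L<0 .
\]
The positive root of this quadratic in $T$ is $\bigl(-3v^0+\sqrt{9(v^0)^2+12u_{\max}L}\bigr)/(2u_{\max})$. Hence the control constraint is violated exactly when \eqref{control constraint threshold} holds (with equality giving tangency).

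\emph{State threshold.} Since $u\ge 0$ on $[0,T]$, $v$ is nondecreasing, so its maximum is
\[
v(T)=v^0-\alpha T^2/2=v^0+\tfrac{3}{2}(L-v^0T)/T .
\]
The condition $v(T)\ge v_{\max}$ rearranges to $T\le 3L/(v^0+2v_{\max})$, which is \eqref{state constraint threshold}. The lower bounds $u\ge u_{\min}$ and $v\ge v_{\min}$ can never bind, because $u\ge0$ and $v\ge v^0\ge v_{\min}$.

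\emph{From violation to an active arc.} By the profile classification (P1)--(P6), the optimal solution is a concatenation of Bang, Unconstrained and Coast arcs. Suppose only the control threshold is violated and the optimal profile had no bang arc. Then it would be P1 or P3.
\begin{itemize}
\item P1 is the unconstrained solution itself, which is infeasible because $u(0)>u_{\max}$.
\item For P3, the affine arc on $[0,\tau_s]$ reaches $v_{\max}$ with $u(\tau_s)=0$. I would show it starts with an even larger acceleration than the unconstrained one. The affine arc must cover the same distance $L$ but is forced to stop accelerating earlier (at $\tau_s\le T$), so its initial slope $u(0)$ is at least $-\alpha T$. I would verify this by writing the affine–coast arc in closed form and comparing $u(0)=2(v_{\max}-v^0)/\tau_s$ against $-\alpha T$.
\end{itemize}
So a bang arc is necessary.

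Symmetrically, suppose only the state threshold is violated and the optimal profile had no coast arc, i.e.\ it is P1, P2 or P6.
\begin{itemize}
\item P1 violates $v\le v_{\max}$.
\item For P2, I would argue that the terminal speed of a bang–affine profile covering distance $L$ in time $T$ is at least that of the unconstrained profile. Saturating the control early shifts acceleration later, and with $u(T)=0$ this increases $v(T)$. So $v(T)>v_{\max}$ still holds, a contradiction.
\item P6 has strictly increasing speed ending above the unconstrained $v(T)$ by the same argument.
\end{itemize}

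The main obstacle is this monotonicity comparison between the constrained profiles (P2, P3) and the unconstrained one. I would try to handle it with a direct algebraic comparison. Parametrize each two-arc profile by its junction time using the distance constraint, which yields one scalar equation. Then show that the relevant quantity ($u(0)$ or $v(T)$) is monotone in the junction time and equals the unconstrained value at the degenerate junction ($\tau=0$ or $\tau=T$).

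As an alternative, I would use a convexity argument. The cost is strictly convex and the feasible set is convex. Therefore, if the optimal solution with a constraint removed violates that constraint, the constraint must be active somewhere in the constrained optimum; otherwise the constrained optimum would be a local, hence global, optimum of the relaxed problem, contradicting uniqueness. In the setting where only one constraint is violated, this argument gives the result more cleanly.
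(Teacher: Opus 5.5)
\textbf{Comparison with the paper.} Your proof is correct, and it does more than the paper. The paper gives no argument for this lemma; it only attributes it to \cite{mahbub2020Automatica-2}, whose Theorems~2(i) and~3(i) supply the two thresholds. Your first two computations (the unconstrained control peaks at $t=0$, the unconstrained speed peaks at $t=T$) are exactly what those thresholds encode. The step from ``the unconstrained trajectory violates the bound'' to ``the optimal trajectory contains the corresponding arc'' is left implicit in the paper, and your elimination over (P1)--(P6) supplies it.

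\textbf{Your comparisons.} The comparisons you planned do hold. For the same $(v^0,L,T)$ and $\beta=v_{\max}-v^0$,
\[
v_{\mathrm{BA}}(T)-v_{\mathrm{unc}}(T)=\frac{u_{\max}\tau_c^2}{4T},
\qquad
u_{\mathrm{AC}}(0)-u_{\mathrm{unc}}(0)=\frac{\beta(T-\tau_s)(2T-\tau_s)}{\tau_s T^2}.
\]
Both vanish exactly at the degenerate junctions you identified, and the pure bang profile is the $\tau_c=T$ endpoint of the first identity.

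\textbf{A shorter route.} You can skip the comparisons by using the paper's own junction formulas. By \eqref{eq:ac_tau}, $\tau_s^*\le T$ is equivalent to the state threshold \eqref{state constraint threshold}. By \eqref{eq:ba_tau}, $\tau_c^*\ge 0$ is equivalent to the control threshold \eqref{control constraint threshold}. A pure bang arc forces $L-v^0T=u_{\max}T^2/2$, which already satisfies the control threshold. So when only one threshold holds, the competing single-constraint profile does not exist; nor does the bang profile, in the state-only case. The lemma then follows from the classification in a few lines.

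\textbf{Two minor caveats.} First, the thresholds are stated with $\le$. At equality the unconstrained trajectory only touches the bound and is itself optimal. Your step ``P1 is infeasible'' therefore needs the strict inequality, so the lemma should be read with strict thresholds or with zero-length arcs allowed.

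Second, your convexity alternative proves less than you suggest. Moving from $u^*$ toward the relaxed optimum $u_{\mathrm{unc}}$ along a segment only shows that the removed bound cannot hold with a uniform margin. That is, it gives $\sup u^*=u_{\max}$ (resp.\ $\max v^*=v_{\max}$), so the bound is \emph{touched}. Touching does not imply an arc: an affine--coast profile with $u(0)=u_{\max}$, or a bang--affine profile with $v(T)=v_{\max}$, touches the bound without containing one. You would still need the same profile elimination, so that route is not cleaner than your main one. Also, the ``local optimum'' there must be understood in $L^\infty$ rather than $L^2$, since $L^2$-neighborhoods do not respect pointwise bounds.
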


Lemma \ref{lma:threshold} applies to the second, third, fourth, and fifth rows of Table \ref{tab:conditions}.
In the second row, violating the control threshold \eqref{control constraint threshold} leads to a bang--affine (B--A) profile, while in the third row, violating the state threshold \eqref{state constraint threshold} leads to an affine--coast profile.
However, the threshold conditions are insufficient to determine whether these trajectories are bang--affine--coast (B--A--C).
This requires an additional condition, which we present next.

\begin{lemma}[B--A--C \cite{mahbub2020Automatica-2}] \label{lma:bac}
    Let $\tau_c^*$ and $\tau_s^*$ be the junction times for the bang--affine and affine--coast junctions.
    A problem satisfying Lemma \ref{lma:threshold} is bang--affine--coast if and only if,
    \begin{align}
        T &\geq \tau_c^\star - \frac{2\left(v_i(t_c^* - v_{\max}\right))}{u_{\max}}, \label{additional_check_bang_affine} \\
        \tau_s^* &\leq \frac{\sqrt{9(v^0)^2 + 12 u_{\max}\left( p^*(\tau_s^*) - p^0 \right)}}{2 u_{\max}}, \label{additional_check_affine_coast}
    \end{align}
    respectively, for each case.
\end{lemma}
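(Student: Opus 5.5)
We handle the two single-threshold cases of Lemma~\ref{lma:threshold} separately. In each case we first build the single-constraint trajectory the lemma predicts, then test whether it breaks the other constraint. The profile is B--A--C exactly when that test fails. Both checks use the explicit form of the unconstrained arc from Lemma~\ref{unconstrained_solution} and the monotonicity facts recorded in Section~\ref{sec:profiles}: the acceleration is nonnegative and affine on the unconstrained arc, so speed is nondecreasing and reaches its maximum at the end of that arc.

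\emph{Bang--affine case (control threshold violated only).} Let $\tau_c^*$ be the bang--affine junction. On $[0,\tau_c^*]$ we have $u=u_{\max}$. On $[\tau_c^*,T]$ the optimality argument of Lemma~\ref{unconstrained_solution}, applied with initial data $(p(\tau_c^*),v_i(\tau_c^*))$, gives $u(t)=\tilde a(t-T)$. Continuity of $u$ at the junction (the Hamiltonian is strictly convex in $u$) gives $\tilde a(\tau_c^*-T)=u_{\max}$, so $u$ falls linearly from $u_{\max}$ to $0$ on the affine arc. Integrating, the terminal speed is
\[
v(T)=v_i(\tau_c^*)+\tfrac12 u_{\max}(T-\tau_c^*).
\]
Because $v$ is nondecreasing, the speed constraint stays inactive if and only if $v(T)< v_{\max}$. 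Solving this inequality for $T$ gives $T<\tau_c^*-2(v_i(\tau_c^*)-v_{\max})/u_{\max}$. Its negation is \eqref{additional_check_bang_affine}. Under that negation the B--A candidate is infeasible. Lemma~\ref{lma:threshold} (whose hypothesis is unchanged) still forces a bang arc, and the infeasibility forces a coast arc, so the profile is B--A--C. Conversely, if the inequality holds, the B--A candidate satisfies all necessary conditions. By uniqueness of the optimum (strict convexity of the cost over a convex feasible set) it is the optimal trajectory, so the profile is not B--A--C.

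\emph{Affine--coast case (state threshold violated only).} Let $\tau_s^*$ be the junction, with $v(\tau_s^*)=v_{\max}$ and $u(\tau_s^*)=0$. On $[0,\tau_s^*]$ the control is affine, vanishes at $\tau_s^*$, and is largest at $t=0$. The control constraint therefore holds if and only if $u(0)\le u_{\max}$. Apply Lemma~\ref{unconstrained_solution} to the subproblem with horizon $\tau_s^*$ and target $p^*(\tau_s^*)$. Its control-threshold expression \eqref{control constraint threshold}, with $(T,L)$ replaced by $(\tau_s^*,p^*(\tau_s^*)-p^0)$, gives exactly the test: the bang constraint activates if and only if $\tau_s^*$ is at most the quadratic-root expression. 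This yields \eqref{additional_check_affine_coast}, with the $-3v^0$ term as written in Table~\ref{tab:conditions}. The same uniqueness argument as before gives the ``iff''.

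The main obstacle is justifying the reduction to the subinterval problem. On each arc the optimal trajectory must solve the restricted problem, by the principle of optimality applied to the fixed junction state, and the free-terminal-speed threshold of \cite{mahbub2020Automatica-2} must transfer to the subproblem. For the affine--coast subproblem the terminal speed is pinned to $v_{\max}$ rather than free. So I would first verify that the condition $u(\tau_s^*)=0$ coming from the junction makes the subproblem's affine solution coincide with that of Lemma~\ref{unconstrained_solution}, and only then invoke the threshold formula.
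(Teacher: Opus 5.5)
Your proposal is correct and takes essentially the paper's route, though the paper gives no standalone proof: it defers to Theorems 4--5 of \cite{mahbub2020Automatica-2} and justifies the lemma exactly as you do, by solving the single-constraint B--A or A--C problem and checking whether its affine arc breaks the remaining bound. Your two computations reproduce the entries of Table~\ref{tab:conditions}: the terminal speed $v_i(\tau_c^*)+\tfrac12 u_{\max}(T-\tau_c^*)$ compared against $v_{\max}$, and the control threshold applied to the subproblem $(\tau_s^*,p^*(\tau_s^*))$ with the missing $-3v^0$ restored.

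Two points need tightening. First, satisfying the necessary conditions plus uniqueness of the optimum does not by itself give optimality. Close the converse with sufficiency for this convex problem, i.e.\ show $\int_0^T u^*(u-u^*)\,dt\ge 0$ for every feasible $u$. This follows by integrating by parts, using $p(T)=L$ for both trajectories together with the sign of $u-u_{\max}$ on the bang arc (B--A case) or of $v-v_{\max}$ on the coast arc (A--C case). Second, at equality in either test the single-constraint candidate is still feasible, so that boundary is a degenerate B--A--C with a zero-length arc rather than an infeasible B--A or A--C.
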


%The second and third rows of Table~\ref{tab:conditions} correspond to cases in which only one constraint becomes active. In the first case, the control constraint threshold \eqref{control constraint threshold} is reached, while the state constraint threshold \eqref{state constraint threshold} is not, leading to a bang--affine (B--A) profile. In the second case, the state constraint threshold \eqref{state constraint threshold} is reached while the control constraint threshold \eqref{control constraint threshold} is not, leading to an affine--coast (A--C) profile. However, in both cases, an additional condition is required to determine the final control profile. In the (B--A) case we need to check additionally the condition
%\begin{align}\label{additional_check_bang_affine}
%    T < \tau_c^* - \frac{2\bigl(v_i(\tau_c^*) - v_{\max}\bigr)}{u_{\max}},
%\end{align}
%where $\tau_c^*$ is the switching time between bang and affine arcs (check box 1 in Fig. \ref{fig:control_profiles}). In the (A--C) case we need to additionally the condition 
%
%\begin{align}\label{additional_check_affine_coast}
%    \tau_s^* > \frac{-3v^0 + \sqrt{9(v^0)^2 + 12\,u_{\max}\bigl(p^*(\tau_s^*) - p(t^0)\bigr)}}{2\,u_{\max}}
%\end{align}
%where $\tau_s^*$ is the switching time between affine and coast arcs (check box 5 in Fig. \ref{fig:control_profiles}). The exact values of switching times $\tau_c^*$ and $\tau_s^*$ are defined in the next section.  

The conditions in Lemma \ref{lma:bac} determine whether the affine segment of the trajectory violates the state or control bounds, resulting in a bang--affine--coast solution in both cases.
%
%The reason for introducing conditions \eqref{additional_check_bang_affine} and \eqref{additional_check_affine_coast} in each case is that introducing a constrained arc changes the remaining affine portion of the trajectory, which may in turn activate the remaining constraint. In other words, a trajectory that initially appears to have a single active constraint may ultimately require a profile that considers activation of both constraints because of this coupling between the constrained and unconstrained arcs. This interdependence is captured by the conditions given in Theorems~4 and~5 in~\cite{mahbub2020Automatica-2}. 
%
Specifically, in Table~\ref{tab:conditions} the third inequality (\eqref{additional_check_bang_affine} and \eqref{additional_check_affine_coast}) in cases (B--A)  and (A--C)  ensures that no additional constraint becomes active after solving the corresponding single-constraint problem. If conditions \eqref{additional_check_bang_affine} and \eqref{additional_check_affine_coast} are not satisfied, the solution transitions directly to case (B--A--C). %For example, in Table \ref{tab:conditions} rows 2 and 4 (and similarly 3 and 5) differ only in the direction of the third inequality, which determines the optimal structure.

In \cite{mahbub2020Automatica-2}, a (B--A--C) profile is obtained only after sequentially examining the (B--A) and (A--C) cases. Specifically, if the conditions in rows 2 and 3 are not satisfied, the control profile escalates to (B--A--C). We next present a result showing that, when \eqref{control constraint threshold} and \eqref{state constraint threshold} hold simultaneously, the (B--A--C) profile arises directly, eliminating the need to check the (B--A) and (A--C) cases. This condition is captured in row 6 of Table \ref{tab:conditions}.

\begin{theorem}\label{thm:joint}
Let $u_i^*(t) = a(t-T)$, $t \in [t^0,\, T]$ be the optimal unconstrained solution of Problem 1, with $a < 0$.
If
\begin{equation}\label{eq:both_cond}
\begin{cases}
    T &\leq \frac{3L}{v^0 + 2\,v_{\max}}, \vspace{1em} \\
    T &\leq \frac{-3v^0 + \sqrt{9(v^0)^2 + 12\,u_{\max}\,L}}{2\,u_{\max}},
\end{cases}
\end{equation}
then the optimal solution of Problem~\ref{prb:ocp} activates both constraints. In particular, it is of bang–affine–coast type, with bang–coast arising as the degenerate boundary case. % activates both $v_i(t) - v_{\max} \leq 0$ and $u_i(t) - u_{\max} \leq 0$.
\end{theorem}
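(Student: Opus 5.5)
We argue by elimination over the admissible profiles (P1)--(P6), using Assumption~\ref{as:feasibility} and the fact that every optimal solution is one of them. The goal is to show that, under \eqref{eq:both_cond}, any candidate containing no bang arc, or no coast arc, either violates a constraint or fails to reach $L$ at time $T$. The remaining profiles are (P4), (P5) and (P6). We then identify (P6) as the measure-zero case in row~7, or absorb it into the degenerate bang--coast limit.

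\textbf{Excluding (P1) and (P2).} (P1) is the unconstrained arc of Lemma~\ref{unconstrained_solution}. Its speed $v(t)=v^0+\tfrac{a}{2}t^2-aTt$ is increasing (since $a<0$) and peaks at $v(T)=v^0-\tfrac{a}{2}T^2$. A direct computation with $a=3(v^0T-L)/T^3$ gives $v(T)=\tfrac{3L}{2T}-\tfrac{v^0}{2}$. The first inequality in \eqref{eq:both_cond} is exactly $v(T)\ge v_{\max}$, so (P1) violates or touches the speed bound; it also violates $u(0)\le u_{\max}$ by the second inequality.

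For (P2) we take any bang--affine profile with junction $\tau_c$, with $u=u_{\max}$ on $[0,\tau_c]$ and $u=\beta(t-T)$ on $[\tau_c,T]$, with continuity $u_{\max}=\beta(\tau_c-T)$. The key comparison is against (P1). Both profiles have $u$ nonincreasing, $u(T)=0$, and the same area $\int_0^T v\,dt=L$. The bang profile keeps $u$ at or below the affine line early on, so to cover the same distance it must accelerate more later. Concretely, I would show that its terminal speed is at least $v(T)$ of the unconstrained solution. One way is to write $L=\int_0^T (T-t)u(t)\,dt + v^0T$, so the weighted integral of $u$ is fixed. Capping $u$ early, where the weight $T-t$ is larger, forces $\int_0^T u\,dt$ to increase. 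Hence the terminal speed of (P2) is at least $\tfrac{3L}{2T}-\tfrac{v^0}{2}\ge v_{\max}$, and since $u\ge 0$ this is a violation unless there is equality, which is the coast-touching degenerate case.

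\textbf{Excluding (P3).} An affine--coast profile has $u=\gamma(t-\tau_s)$ on $[0,\tau_s]$ and reaches $v_{\max}$ at $\tau_s<T$. Compared with (P1), it must cover the same distance $L$ while its speed is capped at $v_{\max}$. I would show that its initial acceleration $u(0)=-\gamma\tau_s$ is at least that of (P1), namely $-aT$. This should follow because (P3) reaches speed $v_{\max}$ earlier with zero final slope, so its affine piece is steeper: a single crossing argument on the linear functions $u$, with matched distance, shows the linear functions cross at most once. Then the second inequality of \eqref{eq:both_cond}, which is $-aT\ge u_{\max}$, shows that (P3) violates $u\le u_{\max}$ (or touches it in the boundary case).

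\textbf{Conclusion and main obstacle.} With (P1)--(P3) excluded, feasibility leaves (P4), (P5) and (P6). Generically the profile is (P5). (P4) occurs exactly when the affine arc shrinks to zero length, which is the equality case of row~8, and this is the degenerate bang--coast case in the statement. (P6) reduces to row~7 with $v(T)\le v_{\max}$, which I expect is incompatible with the first inequality except at equality, in which case it is also a degenerate (B--C) case. The main obstacle is making the monotonicity comparisons in the (P2) and (P3) steps rigorous. For each, I would first try the representations $L=v^0T+\int_0^T(T-t)u\,dt$ and $v(T)=v^0+\int_0^T u\,dt$, together with a single-crossing argument between two nonincreasing piecewise-linear controls having equal weighted integrals.
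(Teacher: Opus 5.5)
Your proof is correct, and it takes a different route from the paper's.

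\textbf{How the paper argues.} The paper never compares with the unconstrained arc. It introduces $\beta=v_{\max}-v^0$, $\alpha=v_{\max}T-L$ and $x=u_{\max}T$. It then imports the explicit single-constraint junction times $\tau_s^*=3\alpha/\beta$ and $\tau_c^*=T-\sqrt{3}\sqrt{T^2-2(L-v^0T)/u_{\max}}$, and rewrites the two thresholds as $3u_{\max}\alpha\le\beta x$ and $3u_{\max}\alpha\le 3\beta x-x^2$. This reduces $u_{AC}(0)=2\beta^2/(3\alpha)\ge u_{\max}$ to $(x-\beta)(x-2\beta)\ge 0$, after a case split at $x=2\beta$. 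It reduces $v_{BA}(T)\ge v_{\max}$ to $(x-2\beta)^2\ge 0$.

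\textbf{How you argue, and what each buys.} You observe that the thresholds say exactly $v_{P1}(T)\ge v_{\max}$ and $u_{P1}(0)\ge u_{\max}$. You then use $L=v^0T+\int_0^T(T-t)u\,dt$ and $v(T)=v^0+\int_0^T u\,dt$ to show that enforcing either bound worsens the violation of the other. Your route needs no junction-time formulas and explains where the thresholds come from. It gives \emph{strict} violation whenever the added constrained arc has positive length; the paper only gets $\ge$. It also disposes of (P1) and (P6) explicitly, which the paper's proof never does. The paper's version, in exchange, is purely algebraic and reuses the junction formulas of Lemmas~\ref{lem:AC_junction} and~\ref{lem:BA_junction}, which it needs anyway.

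\textbf{Checking the single-crossing step.} Single crossing is not a general property of nonincreasing piecewise-linear controls with equal weighted integrals, so you must verify it for each pair.
\begin{itemize}
\item For B--A, write $u_{P1}(t)=c(T-t)$ with $c=-a>0$, and let $k=u_{\max}/(T-\tau_c)$. Then $d=u_{BA}-u_{P1}$ is increasing on $[0,\tau_c]$ and equals $(k-c)(T-t)$ on $[\tau_c,T]$. If $k\le c$, then $d\le d(\tau_c)\le 0$ throughout, which contradicts $\int_0^T(T-t)d\,dt=0$. Hence $d$ changes sign at most once, from negative to positive, and the weighted-integral comparison goes through.
\item For A--C no crossing argument is needed. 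If $u_{AC}(0)\le u_{P1}(0)$, then together with $u_{AC}(\tau_s)=0<u_{P1}(\tau_s)$ this forces $u_{AC}<u_{P1}$ on $(0,T)$. That is impossible when both profiles cover the same distance.
\end{itemize}

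\textbf{Correction on (P6).} (P6) does not survive even at equality. A pure bang arc reaching $L$ has $v(T)=2L/T-v^0$. This exceeds $\tfrac{3L}{2T}-\tfrac{v^0}{2}\ge v_{\max}$ by $(L-v^0T)/(2T)>0$. So (P6) is always infeasible under \eqref{eq:both_cond}. You genuinely need this exclusion, since a surviving pure bang arc would contradict the claim that both constraints are activated.
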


\begin{proof}
We prove that if \eqref{eq:both_cond} holds, neither the bang--affine nor the affine--coast profiles are feasible.
Therefore, both constraints must be
active in the optimal solution.
Let
\[
\beta := v_{\max}-v^0 > 0,
\qquad
\alpha := v_{\max}T-L,
\qquad
x := u_{\max}T,
\]
and, without loss of generality, let $t^0=0$. We first rewrite the two inequalities in \eqref{eq:both_cond} in forms that
will be used repeatedly. Re-arranging
\[
T \le \frac{3L}{v^0+2v_{\max}},
\]
yields
\begin{equation}\label{eq:thm1_alpha_bound} %\label{eq:thm1_first_rewrite}
%(v^0+2v_{\max})T \le 3L
%\;\Longrightarrow\;
3(v_{\max}T-L) \le (v_{\max}-v^0)T \implies 3\alpha \leq \beta T.%,
\end{equation}
%that is,
%\begin{equation}\label{eq:thm1_alpha_bound}
%3\alpha \le \beta T.
%\end{equation}
Multiplying by $u_{\max}$ gives
\begin{equation}\label{eq:thm1_ualpha_bound1}
3u_{\max}\alpha \le \beta x.
\end{equation}
Re-arranging the second inequality in \eqref{eq:both_cond} yields, %$ T \le \frac{-3v^0+\sqrt{9(v^0)^2+12u_{\max}L}}{2u_{\max}}$,
%and %using $x=u_{\max}T$, 
%we get
\begin{equation}
2x+3v^0 \le \sqrt{9(v^0)^2+12u_{\max}L}.    
\end{equation}
Squaring both sides and re-arranging yields $x^2+3v^0x \le 3u_{\max}L.$ Since $v_{\max}=v^0+\beta$, it follows that
\begin{align}
3u_{\max}\alpha
&= 3u_{\max}(v_{\max}T-L) \nonumber\\
&= 3(v^0+\beta)x - 3u_{\max}L \nonumber\\
&\le 3(v^0+\beta)x - (x^2+3v^0x) \nonumber\\
&= 3\beta x - x^2.
\label{eq:thm1_ualpha_bound2}
\end{align}

Next, we examine the two single-constraint profiles separately.

\smallskip
\noindent\textit{1) The affine--coast (A--C) profile cannot remain feasible on its own.} For the A--C profile, the switching time is
$
\tau_s^*=\frac{3\alpha}{\beta}
$
(cf. Lemma~6 in~\cite{mahbub2020Automatica-2}). The control on the affine arc
decreases linearly to zero at $\tau_s^*$, so its initial value is
$
u_i^*(0)=\frac{2\beta}{\tau_s^*}
=\frac{2\beta^2}{3\alpha}.
$
For the A--C profile to be a \emph{single}-constraint solution, we must have
$u_i^*(0)<u_{\max}$; equivalently,
\begin{equation}\label{eq:AC_feasibility}
3u_{\max}\alpha > 2\beta^2.
\end{equation}
We show that \eqref{eq:AC_feasibility} is impossible under \eqref{eq:both_cond}. If $x\le 2\beta$, then from \eqref{eq:thm1_ualpha_bound1},
\[
3u_{\max}\alpha \le \beta x \le 2\beta^2.
\]
If instead $x>2\beta$, then from \eqref{eq:thm1_ualpha_bound2},
\[
3u_{\max}\alpha \le 3\beta x - x^2.
\]
Moreover, for $x\ge 2\beta$,
\[
(x-\beta)(x-2\beta)\ge 0
\;\Longrightarrow\;
x^2-3\beta x+2\beta^2\ge 0,
\]
hence
\[
3\beta x - x^2 \le 2\beta^2.
\]
Therefore, in both cases,
\[
3u_{\max}\alpha \le 2\beta^2,
\]
which implies
\[
u_i^*(0)=\frac{2\beta^2}{3\alpha}\ge u_{\max}.
\]
Thus, the control constraint is active as well, and the A--C profile cannot be
a single state-constrained solution.

\smallskip
\noindent\textit{2) The bang--affine (B--A) profile cannot remain feasible on its own.} For the B--A profile, the switching time is
$
\tau_c^*
=
T-\sqrt{3}\sqrt{T^2-\frac{2(L-v^0T)}{u_{\max}}}
$
(cf. Lemma~7 in~\cite{mahbub2020Automatica-2}), and the terminal speed is
$
v_i^*(T)=v^0+\frac{u_{\max}}{2}(T+\tau_c^*).
$
Substituting the expression for $\tau_c^*$ and using $x=u_{\max}T$, we obtain
\begin{equation}\label{eq:vT_BA}
v_i^*(T)
=
v^0 + x - \frac{\sqrt{3}}{2}
\sqrt{x^2-2u_{\max}(L-v^0T)}.
\end{equation}
From the first inequality in \eqref{eq:both_cond},
\[
(v^0+2v_{\max})T \le 3L,
\]
so
\[
3(L-v^0T)\ge 2(v_{\max}-v^0)T = 2\beta T,
\]
that is,
\begin{equation}\label{eq:Lminusv0T_bound}
L-v^0T \ge \frac{2\beta T}{3}
= \frac{2\beta x}{3u_{\max}}.
\end{equation}
Substituting \eqref{eq:Lminusv0T_bound} into \eqref{eq:vT_BA} gives
\[
v_i^*(T)
\ge
v^0 + x - \frac{\sqrt{3}}{2}
\sqrt{x^2-\frac{4\beta x}{3}}.
\]
Now observe that
\[
(x-\beta)^2 - \frac{3}{4}\left(x^2-\frac{4\beta x}{3}\right)
=
\frac{(x-2\beta)^2}{4}\ge 0.
\]
Hence,
\[
x-\beta \ge \frac{\sqrt{3}}{2}\sqrt{x^2-\frac{4\beta x}{3}},
\]
and therefore
\[
v_i^*(T)\ge v^0+\beta = v_{\max}.
\]
Thus, the speed constraint is active as well, and the B--A profile cannot be a
single control-constrained solution.

\smallskip
Since neither single-constraint profile can remain feasible under
\eqref{eq:both_cond}, both $u_i(t)-u_{\max}\le 0$ and
$v_i(t)-v_{\max}\le 0$ must be active in the optimal solution.
\end{proof}

Theorem~\ref{thm:joint} allows direct identification of the B--A--C profile from the boundary conditions of Problem 1, eliminating the need for the intermediate interdependence check via Theorems~4 and~5 in~\cite{mahbub2020Automatica-2} when both conditions in~\eqref{eq:both_cond} hold.

For completeness, we also present the conditions for the bang-and bang-coast solutions, given in rows 7 and 8 of Table \ref{tab:conditions}.
These are pathological cases, which only occur for a measure-zero set of initial conditions.
We present the bang conditions next, and provide bang-coast in the following section.

\begin{lemma} \label{lma:bang}
The bang arc is optimal if and only if,
\begin{equation}
        T =  \frac{\sqrt{v_0^2 + 2u_{\max}L}}{u_{\max}} - \frac{v_0}{u_{\max}} \leq \frac{v_{\max}-v_0}{u_{\max}},
\end{equation}
which is a measure-zero set.
\end{lemma}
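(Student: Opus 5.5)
The plan is to prove both directions by checking the necessary conditions of the constrained problem directly on the candidate $u\equiv u_{\max}$ on $[0,T]$, with $t^0=0$ and $p^0=0$.

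\emph{Sufficiency.} Suppose the displayed condition holds.

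1. \emph{Feasibility.} Under $u\equiv u_{\max}$ the position is $p(t)=v_0t+\tfrac{u_{\max}}{2}t^2$.
   The equality $T=\bigl(\sqrt{v_0^2+2u_{\max}L}-v_0\bigr)/u_{\max}$ is exactly the positive root of $p(T)=L$, so the boundary condition is met.
   The inequality $T\le (v_{\max}-v_0)/u_{\max}$ gives $v(t)=v_0+u_{\max}t\le v_{\max}$ on $[0,T]$, so the speed constraint holds.
   Hence the candidate is admissible.

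2. \emph{It is the only admissible control.} Take any admissible $u$ with $u\le u_{\max}$. Then $v(t)\le v_0+u_{\max}t$, and therefore
   \[
   p(T)\le v_0T+\tfrac{u_{\max}}{2}T^2=L,
   \]
   with equality if and only if $u=u_{\max}$ almost everywhere.
   So reaching $L$ at time $T$ forces the bang control. Being the unique feasible control, it is optimal, with no costate computation needed. This is the cleanest route.

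\emph{Necessity.} Suppose the optimal profile is a single bang arc on $[0,T]$.

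1. The terminal condition $p(T)=L$ with $u\equiv u_{\max}$ reproduces the equality for $T$, taking the positive root since $T>0$.
2. Admissibility of the speed constraint at $t=T$ gives $v_0+u_{\max}T\le v_{\max}$, which is the inequality.

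One subtlety deserves a remark. For P6 the transversality condition $u(T)=0$ fails, since $u(T)=u_{\max}\neq0$. This is consistent because when the problem has a unique feasible trajectory the multiplier analysis degenerates: the control constraint is active up to $T$, and $\lambda_2(T)=0$ no longer forces $u(T)=0$. I would note this rather than derive the profile from the Hamiltonian.

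\emph{Measure-zero claim.} The condition contains an equality, so the admissible data $(v_0,L,T,u_{\max},v_{\max})$ lie on the graph of a smooth function $T=T(v_0,L,u_{\max})$. This graph is a codimension-one hypersurface and has Lebesgue measure zero. The inequality only restricts to a subset of it.

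The main obstacle is justifying optimality without appealing to the transversality condition. The uniqueness-by-comparison argument in the sufficiency part handles this, and I would check carefully that the bound $v(t)\le v_0+u_{\max}t$ integrates to a strict inequality unless $u=u_{\max}$ almost everywhere.
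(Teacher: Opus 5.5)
Your argument is correct. Its necessity half is the paper's whole proof: put $u\equiv u_{\max}$ into the kinematics, read off $L=v_0T+\tfrac12u_{\max}T^2$ and $v_0+u_{\max}T\le v_{\max}$, and solve for $T$. The real difference is sufficiency. The paper stops once the bang trajectory meets the boundary data and the speed bound, so it tacitly equates admissibility with optimality. Your comparison argument is cleanest in the form
\[
L-p(T)=\int_0^T (T-s)\bigl(u_{\max}-u(s)\bigr)\,ds\;\ge\;0,
\]
with equality iff $u=u_{\max}$ a.e. It shows the admissible set is a singleton, so optimality needs no costates. This gives a genuine proof of the ``if'' direction, which the paper's computation does not supply. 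It also makes precise the paper's remark that this case lies on the boundary of the data allowed by Assumption~\ref{as:feasibility}. Here $L$ is exactly the farthest position reachable at time $T$, so increasing $L$ makes the problem infeasible and decreasing it makes the full bang arc overshoot.

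Two of your side remarks need correcting, though neither affects the proof. First, $u(T)=0$ does not fail because $\lambda_2(T)=0$ stops forcing it. In the normal case the pointwise minimizer of $H$ is $-\lambda_2/2$ clipped to $[u_{\min},u_{\max}]$, so $\lambda_2(T)=0$ still forces $u(t)\to 0$ as $t\to T$. The bang trajectory is instead an abnormal extremal: the cost multiplier is zero and $\lambda_2=c_1(T-t)$ with $c_1<0$. Second, choosing the $+\sqrt{\cdot}$ root ``since $T>0$'' is valid only for $L>0$. If $L<0$ and $v_0<0$, both real roots (when they exist) are positive. By your own comparison argument, the smaller root also makes bang the unique admissible control. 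So the statement, like the paper's proof, tacitly assumes $L>0$.
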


\begin{proof}
Substituting the boundary conditions of Problem 1 into constant-acceleration kinematics yields,    
\begin{align*}
    L &= v_0 T+\frac{1}{2}u_{\max} T^2,\\
    v_f &=v_0+u T \leq v_{\max}.
\end{align*}
Solving both equations for $T$ completes the proof.
\end{proof}

Lemma \ref{lma:bang} is provided for completeness, as the bang case only occurs at a single point on the boundary of the admissible initial conditions set under Assumption \ref{as:feasibility}.

Finally, the bang–coast case (the last row in Table \ref{tab:conditions}) is a degenerate boundary case of the bang–affine–coast profile. In particular, it arises when the two switching times of the bang–affine–coast solution coincide, so that the intermediate affine arc collapses to zero duration. For this reason, we do not treat bang–coast separately at this stage. Instead, in the next section, when deriving the switching times for the bang–affine–coast profile, we show how the bang–coast structure follows directly when these switching times are identical.
%
%Building upon \cite{mahbub2020Automatica-2}, we have established the conditions that determine the resulting control profile.

\section{Analytical constrained trajectories}
In this section, we propose the analytical trajectories per profile that are implemented by the agent after the classification presented in Table \ref{tab:conditions}.

\subsection{Affine-coast profile}
Next, we showcase the exact affine-coast (A-C) control profile along with the switching time $\tau_s^*$.
\begin{lemma}\label{lem:AC_junction}
Consider the optimal control Problem~1, and suppose that only $v_i(t) - v_{\max} \leq 0$ becomes active.
Then the optimal trajectory consists of two arcs: an unconstrained arc on $[t^0,\, \tau_s^*]$ with control input
\begin{equation}\label{eq:ac_control}
    u_i^*(t) = u_0\!\left(1 - \frac{t - t^0}{\tau_s^* - t^0}\right), \quad t \in [t^0,\, \tau_s^*],
\end{equation}
where $u_0 := u_i^*(t^0)$, followed by a state-constrained arc $v_i^*(t) = v_{\max}$, $u_i^*(t) = 0$ on $[\tau_s^*,\, T]$.
The junction point is given explicitly by
\begin{equation}\label{eq:ac_tau}
    \tau_s^* = \frac{3(v_{\max}\, T - L)}{v_{\max} - v^0},
\end{equation}
where $L := p(T)$ for $p(t^0)=0$.
\end{lemma}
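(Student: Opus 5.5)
We take $t^0=0$ and $p^0=0$ without loss of generality, as in Lemma~\ref{unconstrained_solution}. The proof then has three parts: fix the structure of the trajectory from the optimality conditions, impose the junction conditions at $\tau_s^*$, and integrate the kinematics to get an explicit equation for $\tau_s^*$.

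\emph{Structure of the two arcs.} On the unconstrained arc $[0,\tau_s^*]$, the Hamiltonian argument in the proof of Lemma~\ref{unconstrained_solution} still applies, because no constraint is active there. The costate $\lambda_2$ is therefore affine and $u_i^*(t)=at+b$. On the state-constrained arc we have $v=v_{\max}$, so $u=\dot v=0$. The speed constraint has relative degree one, and the optimal control for the energy cost is continuous at an entry junction of such a constraint (the standard result for first-order state constraints used in \cite{mahbub2020Automatica-2}). Hence $u_i^*(\tau_s^{*-})=0$. An affine function that vanishes at $\tau_s^*$ and equals $u_0$ at $0$ is exactly \eqref{eq:ac_control}. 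We also need $v(\tau_s^*)=v_{\max}$ with $v<v_{\max}$ before the junction. This holds because $u$ is positive and decreasing on the affine arc, so $v$ increases monotonically up to $v_{\max}$.

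\emph{Kinematic conditions.} Integrating \eqref{eq:ac_control} gives the speed increment over the affine arc as the area of a triangle:
\[
v_{\max}-v^0=\tfrac{1}{2}u_0\tau_s^*,\qquad\text{so}\qquad u_0=\frac{2(v_{\max}-v^0)}{\tau_s^*}.
\]
Integrating once more, the position at the junction is
\[
p(\tau_s^*)=v^0\tau_s^*+u_0\Bigl(\tfrac{(\tau_s^*)^2}{2}-\tfrac{(\tau_s^*)^2}{6}\Bigr)=v^0\tau_s^*+\tfrac{1}{3}u_0(\tau_s^*)^2 .
\]
Substituting $u_0$ gives $p(\tau_s^*)=v^0\tau_s^*+\tfrac{2}{3}(v_{\max}-v^0)\tau_s^*=\tfrac{1}{3}(v^0+2v_{\max})\tau_s^*$. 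The coast arc adds $v_{\max}(T-\tau_s^*)$, so the terminal condition reads
\[
L=\tfrac{1}{3}(v^0+2v_{\max})\tau_s^*+v_{\max}(T-\tau_s^*)=v_{\max}T-\tfrac{1}{3}(v_{\max}-v^0)\tau_s^*.
\]
This is linear in $\tau_s^*$, and solving it yields \eqref{eq:ac_tau}.

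\emph{Consistency check and main obstacle.} We still need $0<\tau_s^*\le T$. The inequality $\tau_s^*\le T$ is equivalent to $3(v_{\max}T-L)\le(v_{\max}-v^0)T$, which is the state threshold \eqref{state constraint threshold}. Positivity requires $v_{\max}T>L$, which follows from feasibility (Assumption~\ref{as:feasibility}) together with $v\le v_{\max}$. The main obstacle is justifying $u(\tau_s^{*-})=0$, i.e., continuity of the control at the junction. If that is disputed, we would argue it directly. A control jump at $\tau_s^*$ could be smoothed to reduce the cost while keeping $v\le v_{\max}$. Alternatively, if $u(\tau_s^{*-})>0$, then $v$ would reach $v_{\max}$ with positive slope, and the resulting kink in $v$ would be suboptimal by a local variation argument. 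Either route leads to $u(\tau_s^{*-})=0$.
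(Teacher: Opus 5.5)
Your proposal is correct and follows the paper's proof essentially step by step: an affine control vanishing at $\tau_s^*$, speed continuity giving $u_0=2(v_{\max}-v^0)/\tau_s^*$, the position at the junction, and a terminal-position equation that is linear in $\tau_s^*$. You go slightly beyond the paper in two places. You check $0<\tau_s^*\le T$, which recovers the state threshold. You also justify $u(\tau_s^{*-})=0$, whereas the paper simply cites the necessary conditions of the prior work.
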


\begin{proof}
Without loss of generality, let $t^0 = 0$ and $p(t^0)=0$  and $\beta := v_{\max} - v^0$. We present the proof in three steps.

\smallskip
\noindent\textit{Step~1: Control and speed on the unconstrained arc.}
From the necessary conditions~\cite{mahbub2020Automatica-2}, the optimal control on the unconstrained arc is linear in time with $u_i^*(\tau_s^*) = 0$, yielding~\eqref{eq:ac_control}.
Integrating, the speed on $[0,\,\tau_s^*]$ yields
\begin{equation}\label{eq:ac_speed}
    v_i(t) = v^0 + u_0\!\left(t - \frac{t^2}{2\tau_s^*}\right).
\end{equation}
Evaluating at $t = \tau_s^*$ and imposing the continuity condition $v_i(\tau_s^*) = v_{\max}$, we obtain
\begin{equation}\label{eq:ac_u0}
    u_0 = \frac{2\beta}{\tau_s^*}.
\end{equation}

\smallskip
\noindent\textit{Step~2: Position at the junction point.}
Integrating the speed profile~\eqref{eq:ac_speed} over $[0,\,\tau_s^*]$ and substituting~\eqref{eq:ac_u0}, the position at the junction point is
\begin{equation}\label{eq:ac_pos}
    p(\tau_s^*) = v^0\,\tau_s^* + \frac{u_0 (\tau_s^*)^2}{3} = v^0\,\tau_s^* + \frac{2\beta\,\tau_s^*}{3}.
\end{equation}

\smallskip
\noindent\textit{Step~3: Applying the terminal position constraint.}
On the state-constrained arc $[\tau_s^*,\, T]$, the speed is constant at $v_{\max}$, so
\begin{equation}\label{eq:ac_terminal}
    p(\tau_s^*) + v_{\max}(T - \tau_s^*) = L.
\end{equation}
Substituting~\eqref{eq:ac_pos} into~\eqref{eq:ac_terminal} yields
\begin{equation*}
    v^0\,\tau_s^* + \frac{2\beta}{3}\,\tau_s^* + v_{\max}\,T - v_{\max}\,\tau_s^* = L.
\end{equation*}
Collecting the coefficient of $\tau_s^*$ and noting that $v^0 + \frac{2\beta}{3} - v_{\max} = -\frac{\beta}{3}$, simplifies to
$-\frac{\beta}{3}\,\tau_s^* + v_{\max}\,T = L.$
Solving for $\tau_s^*$ yields~\eqref{eq:ac_tau}.
\end{proof}

% \begin{remark}
% The expression~\eqref{eq:ac_tau} is consistent with Lemma~6 in~\cite{mahbub2020Automatica-2}, which states $\tau_s^* = 3(p(T) - v_{\max}\, T)/(v^0 - v_{\max})$.
% Noting that $(v_{\max}\, T - L)/(v_{\max} - v^0) = (L - v_{\max}\, T)/(v^0 - v_{\max})$, the two expressions are identical.
% However, the derivation above is considerably more direct.
% In~\cite{mahbub2020Automatica-2} (Appendix~A), the junction point is obtained by solving a system of three coupled equations~(A.2a)--(A.2c) involving the constants of integration $a^{(1)}$, $b^{(1)}$ of the unconstrained arc, and eliminating them through successive substitution.
% Here, exploiting the speed continuity condition $v_i(\tau_s^*) = v_{\max}$ to express $u_0$ in terms of $\tau_s^*$ (Step~1) reduces the problem to a single linear equation in $\tau_s^*$ (Step~3), yielding the closed-form expression without requiring any intermediate constants of integration.
% \end{remark}

\subsection{Bang-affine profile}

Next, we showcase the bang-affine profile. 

\begin{lemma}\label{lem:BA_junction}
Consider the optimal control Problem~1, and suppose that only $u_i(t) - u_{\max} \leq 0$ becomes active.
Then the optimal trajectory consists of two arcs: a control-constrained arc $u_i^*(t) = u_{\max}$ on $[t^0,\, \tau_c^*]$, followed by an unconstrained arc on $[\tau_c^*,\, T]$ with control input
\begin{equation}\label{eq:ba_control}
    u_i^*(t) = \frac{u_{\max}}{T - \tau_c^*}\,(T - t), \quad t \in [\tau_c^*,\, T].
\end{equation}
The junction point is given explicitly by
\begin{equation}\label{eq:ba_tau}
    \tau_c^* = T - \sqrt{3}\,\sqrt{(T)^2 - \frac{2(L - v^0\, T)}{u_{\max}}},
\end{equation}
where $L := p(T)$ for $p(t^0)=0$.
\end{lemma}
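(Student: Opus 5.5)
The plan mirrors the proof of Lemma~\ref{lem:AC_junction}. We parametrize the unconstrained arc by a single unknown, namely its duration. We then impose the terminal position, which should reduce everything to one scalar equation for $\tau_c^*$ with no constants of integration. Without loss of generality we set $t^0=0$ and $p(t^0)=0$, and write $D:=T-\tau_c^*$ for the length of the unconstrained arc.

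\emph{Step 1 (structure of the control).} On the unconstrained arc the necessary conditions~\cite{mahbub2020Automatica-2} give a control that is affine in $t$. The free terminal speed gives $u_i^*(T)=0$ through the transversality condition $\lambda_2(T)=0$. At the entry of the unconstrained arc we use continuity of the optimal control at a control-constraint junction: $u=-\lambda_2/2$ is saturated on the bang arc and $\lambda_2$ is continuous, since only a control constraint is active. This gives $u_i^*(\tau_c^*)=u_{\max}$. Together these fix the affine arc uniquely as \eqref{eq:ba_control}. We would state this continuity argument explicitly, since it is what removes the extra unknown.

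\emph{Step 2 (kinematics).} On the bang arc, $v(\tau_c^*)=v^0+u_{\max}\tau_c^*$ and $p(\tau_c^*)=v^0\tau_c^*+\tfrac12u_{\max}(\tau_c^*)^2$. Integrating \eqref{eq:ba_control} over an interval of length $D$, the speed gains $u_{\max}\bigl[(t-\tau_c^*)-(t-\tau_c^*)^2/(2D)\bigr]$. The position increment is therefore $v(\tau_c^*)D+u_{\max}D^2/3$.

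\emph{Step 3 (terminal condition).} Imposing $p(T)=L$ and substituting $\tau_c^*=T-D$, the terms should collapse to
\[
L=v^0T+u_{\max}\Bigl(\tfrac{T^2}{2}-\tfrac{D^2}{6}\Bigr),
\]
because $(T-D)^2/2+(T-D)D+D^2/3=T^2/2-D^2/6$. Hence
\[
D^2=3\Bigl(T^2-\tfrac{2(L-v^0T)}{u_{\max}}\Bigr),
\]
and taking $D=+\sqrt{D^2}$ (since $D\ge0$) yields \eqref{eq:ba_tau}.

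\emph{Main obstacle: admissibility of the root.} The algebra itself is routine; the step needing care is checking that the formula is well defined and lies in the right interval. The radicand is nonnegative because $L\le v^0T+u_{\max}T^2/2$, which is the maximal distance reachable under $u\le u_{\max}$ and is implied by Assumption~\ref{as:feasibility}. We also need $D\le T$, i.e.\ $\tau_c^*\ge0$. This is equivalent to $3(L-v^0T)\ge u_{\max}T^2$, and we would obtain it by rearranging the violated control threshold \eqref{control constraint threshold}, which after squaring reads $u_{\max}^2T^2+3v^0u_{\max}T\le3u_{\max}L$. Finally, we would note that the bang arc is consistent with $u\le u_{\max}$ and that the affine arc decreases from $u_{\max}$ to $0$, so it respects the control bound.
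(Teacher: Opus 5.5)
Your proposal is correct and follows essentially the same route as the paper. The affine arc is fixed by $u_i^*(\tau_c^*)=u_{\max}$ and $u_i^*(T)=0$, and the terminal-position condition gives a quadratic in $\tau_c^*$. Your substitution $D=T-\tau_c^*$ just completes the square of the paper's equation $(\tau_c^*)^2-2T\tau_c^*-2T^2+6(L-v^0T)/u_{\max}=0$, and choosing $D\ge 0$ corresponds to discarding the root $\tau_c^{(+)}>T$.

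Your costate-continuity justification of $u_i^*(\tau_c^*)=u_{\max}$ is a small addition the paper's proof omits. So are your checks that the radicand is nonnegative (by feasibility) and that $\tau_c^*\ge 0$ (equivalent to the control threshold holding).
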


\begin{proof}
Without loss of generality, let $t^0 = 0$ and $p(t^0) = 0$.

\smallskip
\noindent\textit{Step~1: Control on the unconstrained arc.}
From the necessary conditions~\cite{mahbub2020Automatica-2}, the optimal control on the unconstrained arc is linear in time with $u_i^*(\tau_c^*) = u_{\max}$ (continuity at the junction point) and $u_i^*(T) = 0$ (terminal condition from Lemma~1 in~\cite{mahbub2020Automatica-2}).
Interpolating between these two points yields~\eqref{eq:ba_control}.

\smallskip
\noindent\textit{Step~2: Speed and position at the terminal time.}
On the control-constrained arc $[0,\,\tau_c^*]$, the speed and position are
\begin{equation}\label{eq:ba_bang}
    v_i(\tau_c^*) = v^0 + u_{\max}\,\tau_c^*, \qquad p(\tau_c^*) = v^0\,\tau_c^* + \tfrac{1}{2}\,u_{\max}\,(\tau_c^*)^2.
\end{equation}
On the unconstrained arc $[\tau_c^*,\, T]$, integrating~\eqref{eq:ba_control} yields the terminal speed
\begin{equation}\label{eq:ba_vT}
    v_i(T) = v^0 + \frac{u_{\max}(T + \tau_c^*)}{2},
\end{equation}
and, using the position propagation on each arc together with~\eqref{eq:ba_bang}, the terminal position reduces to
\begin{equation}\label{eq:ba_xT}
    p(T) = v^0\,T + u_{\max}\!\left(\frac{T^2}{3} + \frac{T\,\tau_c^*}{3} - \frac{(\tau_c^*)^2}{6}\right).
\end{equation}

\smallskip
\noindent\textit{Step~3: Solving for the junction point.}
Imposing the terminal position constraint $p(T) = L$ in~\eqref{eq:ba_xT} yields
\begin{equation}\label{eq:ba_pos_constraint}
    L - v^0\,T = u_{\max}\!\left(\frac{T^2}{3} + \frac{T\,\tau_c^*}{3} - \frac{(\tau_c^*)^2}{6}\right).
\end{equation}
Multiplying both sides by $6/u_{\max}$ and rearranging, we obtain the quadratic equation
\begin{equation}\label{eq:ba_quadratic}
    (\tau_c^*)^2 - 2T\,\tau_c^* - 2T^2 + \frac{6(L - v^0\,T)}{u_{\max}} = 0.
\end{equation}
Applying the quadratic formula, the two roots are
\begin{align}\label{eq:ba_roots}
    \tau_c^{(\pm)} &= T \pm \sqrt{3T^2 - \frac{6(L - v^0\,T)}{u_{\max}}} \nonumber \\ &= T \pm \sqrt{3}\,\sqrt{T^2 - \frac{2(L - v^0\,T)}{u_{\max}}}.
\end{align}
The root $\tau_c^{(+)} = T + \sqrt{3}\,\sqrt{\cdot} > T$ violates $\tau_c^* < T$ and is therefore inadmissible.
The admissible junction point is $\tau_c^{(-)}$, which yields~\eqref{eq:ba_tau}.
\end{proof}

\subsection{Case where the control input is violated at the beginning (bang-affine-coast)}
Next, we present the case of the bang-affine-coast control profile. Here, for the identification of the switching times $\tau_s^*$ and $\tau_c^*$, we end up in a significantly simpler formulation in comparison to \cite{mahbub2020Automatica-2}.

\begin{lemma}\label{lem:BAC_junctions}
Consider the optimal control Problem~1, and suppose that both $u_i(t) - u_{\max} \leq 0$ and $v_i(t) - v_{\max} \leq 0$ become active.
Then the optimal trajectory consists of three arcs: a control-constrained arc $u_i^*(t) = u_{\max}$ on $[t^0,\, \tau_c^*]$, an unconstrained arc $u_i^*(t) = a^{(2)} t + b^{(2)}$ on $[\tau_c^*,\, \tau_s^*]$, and a state-constrained arc $v_i^*(t) = v_{\max}$, $u_i^*(t) = 0$ on $[\tau_s^*,\, T]$.
The junction points are given explicitly by
\begin{align}
    \tau_c^* &= \frac{\beta - \sqrt{\Psi}}{u_{\max}}, \label{eq:tau_c}\\[4pt]
    \tau_s^* &= \frac{\beta + \sqrt{\Psi}}{u_{\max}}, \label{eq:tau_s}
\end{align}
where $\beta := v_{\max} - v^0$ and $\Psi := 6\,u_{\max}(T\, v_{\max} - L) - 3\beta^2$, and $L := p(T)$ for $p(t^0)=0$.
\end{lemma}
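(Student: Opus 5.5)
Following the pattern of Lemmas~\ref{lem:AC_junction} and~\ref{lem:BA_junction}, set $t^0=0$, $p(0)=0$, and avoid integration constants by writing the middle arc directly in terms of its endpoint values. The necessary conditions~\cite{mahbub2020Automatica-2} give three facts about the unconstrained arc on $[\tau_c^*,\tau_s^*]$: its control is affine, it is continuous at both junctions, and so $u(\tau_c^*)=u_{\max}$ and $u(\tau_s^*)=0$. Let $d:=\tau_s^*-\tau_c^*$. Then the middle control is
\[
u(t)=u_{\max}\,\frac{\tau_s^*-t}{d}, \qquad t\in[\tau_c^*,\tau_s^*].
\]

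\textit{Step 1: speed matching.} On the bang arc the speed at $\tau_c^*$ is $v^0+u_{\max}\tau_c^*$. The middle arc adds the area of a triangle, $u_{\max}d/2$. Imposing $v(\tau_s^*)=v_{\max}$ gives the linear relation
\[
u_{\max}\bigl(\tau_c^*+\tfrac{d}{2}\bigr)=\beta,
\qquad\text{equivalently}\qquad
\tau_c^*+\tau_s^*=\frac{2\beta}{u_{\max}}.
\]
This already shows that $\tau_c^*$ and $\tau_s^*$ are symmetric about $\beta/u_{\max}$, which matches the form of \eqref{eq:tau_c}--\eqref{eq:tau_s}.

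\textit{Step 2: position.} It is cleanest to use the deficit with respect to cruising at $v_{\max}$. Since $v=v_{\max}$ on $[\tau_s^*,T]$, we have
\[
v_{\max}T-L=\int_0^{\tau_s^*}\bigl(v_{\max}-v(t)\bigr)\,dt .
\]
I will compute this integral in two pieces.
\begin{itemize}
\item On the bang arc, $v_{\max}-v$ is linear from $\beta$ down to $\beta-u_{\max}\tau_c^*$. Its contribution is $\beta\tau_c^*-u_{\max}(\tau_c^*)^2/2$.
\item On the middle arc, measure time backward from $\tau_s^*$ with $s=\tau_s^*-t$. Then $v_{\max}-v(s)=u_{\max}s^2/(2d)$, whose integral over $[0,d]$ is $u_{\max}d^2/6$.
\end{itemize}
Substituting $\tau_c^*=\beta/u_{\max}-d/2$ from Step~1 should collapse the sum to
\[
v_{\max}T-L=\frac{\beta^2}{2u_{\max}}+\frac{u_{\max}d^2}{24}.
\]

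\textit{Step 3: solve.} Step~2 gives $u_{\max}^2d^2=12u_{\max}(v_{\max}T-L)-6\beta^2=2\Psi$. The quadratic has a single admissible root, the positive one, $d=\sqrt{2\Psi}/u_{\max}$, since $d<0$ is impossible. Combining this with Step~1 yields $\tau_{c,s}^*=(\beta\mp\tfrac{1}{2}u_{\max}d)/u_{\max}$.

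Here I expect the main obstacle: the constant. If my middle-arc integral is right, $\tfrac12 u_{\max}d=\sqrt{\Psi/2}$, not $\sqrt{\Psi}$. So I will carefully recheck the $d^2/6$ term and the cross terms in the Step~2 algebra against the stated normalization of $\Psi$. I will fix whichever side is off, being careful in particular with factors in $\Psi$, rather than force agreement.

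Finally, I will note the admissibility requirements: $\Psi\ge 0$ and $0\le\tau_c^*\le\tau_s^*\le T$. The boundary case $\Psi=0$ gives $\tau_c^*=\tau_s^*=\beta/u_{\max}$, which is precisely the bang--coast row~8 of Table~\ref{tab:conditions}.
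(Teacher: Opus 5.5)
Your approach is sound, and your Step~2 identity
\[
v_{\max}T-L=\frac{\beta^2}{2u_{\max}}+\frac{u_{\max}d^2}{24}
\]
is correct. Both the middle-arc term $u_{\max}d^2/6$ and the cancellation of the $\beta d/2$ cross terms after substituting $\tau_c^*=\beta/u_{\max}-d/2$ check out.

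The error is in Step~3. Isolating $d^2$ requires multiplying by $24u_{\max}$, which gives
\[
u_{\max}^2d^2=24u_{\max}(v_{\max}T-L)-12\beta^2=4\Psi,
\]
not $12u_{\max}(v_{\max}T-L)-6\beta^2=2\Psi$. Hence $d=2\sqrt{\Psi}/u_{\max}$, so $\tfrac12 u_{\max}d=\sqrt{\Psi}$ and $\tau_{c,s}^*=(\beta\mp\sqrt{\Psi})/u_{\max}$, exactly as stated. The ``obstacle'' you anticipate ($\sqrt{\Psi/2}$ versus $\sqrt{\Psi}$) is only an artifact of this slip. There is nothing to repair in the normalization of $\Psi$, and as written your proposal would end by disputing a correct formula.

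A quick numerical check confirms this. Take $v^0=0$, $v_{\max}=10$, $u_{\max}=2$, junctions at $4$ and $6$, and $T=10$. Then $L=224/3$, $v_{\max}T-L=76/3$, and $\Psi=4$, so $(\beta\mp\sqrt{\Psi})/u_{\max}=4,\,6$. Your erroneous $d=\sqrt{2\Psi}/u_{\max}=\sqrt{2}$ would instead give the wrong junctions.

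Once corrected, your argument is a tidier variant of the paper's. The paper uses the same speed-matching relation $\tau_c^*+\tau_s^*=2\beta/u_{\max}$. It then eliminates $\tau_c^*$, integrates forward to get $p(\tau_s^*)$ as a quadratic in $\tau_s^*$, and imposes $p(\tau_s^*)+v_{\max}(T-\tau_s^*)=L$. Finally it applies the quadratic formula (discriminant $\Psi$) and discards the smaller root, because $\tau_c^*<\tau_s^*$ is equivalent to $\tau_s^*>\beta/u_{\max}$.

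Your centered variable $d$ completes the square in advance, which buys several things:
\begin{itemize}
\item the resulting equation has no linear term;
\item root selection reduces to $d\ge 0$;
\item the symmetry of the two junctions about $\beta/u_{\max}$ is built in;
\item the identity $\Psi=u_{\max}^2d^2/4$ makes both the requirement $\Psi\ge 0$ and the bang--coast degeneration at $\Psi=0$ immediate.
\end{itemize}
The deficit integral also spares you from tracking $p(\tau_c^*)$. Both routes take the same structural inputs from the cited necessary conditions: an affine middle arc, continuity of $u$ at both junctions, and $u(\tau_s^*)=0$.
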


\begin{proof}
Without loss of generality, let $t^0 = 0$ and $p(t^0) = 0$.

\smallskip
\noindent\textit{Step~1: Relating junction points via speed continuity.}
On the control-constrained arc $[0,\,\tau_c^*]$, the speed evolves as $v_i(t) = v^0 + u_{\max}\,t$, yielding
\begin{equation}\label{eq:v_tau1}
    v_i(\tau_c^*) = v^0 + u_{\max}\,\tau_c^*.
\end{equation}
On the unconstrained arc $[\tau_c^*,\,\tau_s^*]$, the optimal control is linear in time with $u_i^*(\tau_c^*) = u_{\max}$ and $u_i^*(\tau_s^*) = 0$ (from the necessary conditions~\cite{mahbub2020Automatica-2}), which yields
\begin{equation}\label{eq:u_affine}
    u_i^*(t) = \frac{-u_{\max}}{\tau_s^* - \tau_c^*}\,t + \frac{u_{\max}\,\tau_s^*}{\tau_s^* - \tau_c^*}, \quad t \in [\tau_c^*,\, \tau_s^*].
\end{equation}
Integrating~\eqref{eq:u_affine} over $[\tau_c^*,\,\tau_s^*]$ and using~\eqref{eq:v_tau1}, the speed at $\tau_s^*$ is
\begin{equation}\label{eq:v_tau2}
    v_i(\tau_s^*) = v^0 + \frac{u_{\max}(\tau_c^* + \tau_s^*)}{2}.
\end{equation}
Since the state-constrained arc requires $v_i(\tau_s^*) = v_{\max}$, we obtain from~\eqref{eq:v_tau2}
\begin{equation}\label{eq:sum}
    \tau_c^* + \tau_s^* = \frac{2\beta}{u_{\max}},
\end{equation}
where $\beta := v_{\max} - v^0$.
Therefore,
\begin{equation}\label{eq:tau1_from_tau2}
    \tau_c^* = \frac{2\beta}{u_{\max}} - \tau_s^*.
\end{equation}

\smallskip
\noindent\textit{Step~2: Deriving the position at the state constraint junction.}
The position at $\tau_c^*$ from the control-constrained arc is $p(\tau_c^*) = v^0\,\tau_c^* + \frac{1}{2}u_{\max}(\tau_c^*)^2$.
Integrating the speed profile on the unconstrained arc and simplifying using~\eqref{eq:tau1_from_tau2}, the position at $\tau_s^*$ reduces to
\begin{equation}\label{eq:p_tau2}
    p(\tau_s^*) = \frac{4\beta}{3}\,\tau_s^* - \frac{2\beta^2}{3\,u_{\max}} - \frac{u_{\max}}{6}(\tau_s^*)^2 + v^0\,\tau_s^*.
\end{equation}

\smallskip
\noindent\textit{Step~3: Applying the terminal position constraint.}
On the state-constrained arc $[\tau_s^*,\, T]$, the speed is constant at $v_{\max}$, so
\begin{equation}\label{eq:terminal}
    p(\tau_s^*) + v_{\max}(T - \tau_s^*) = L.
\end{equation}
Substituting~\eqref{eq:p_tau2} into~\eqref{eq:terminal} and using $v^0 + \frac{4\beta}{3} - v_{\max} = \frac{\beta}{3}$, we obtain the quadratic equation
\begin{equation}\label{eq:quadratic}
    \frac{u_{\max}}{2}(\tau_s^*)^2 - \beta\,\tau_s^* + \frac{2\beta^2}{u_{\max}} - 3(T\,v_{\max} - L) = 0.
\end{equation}

\smallskip
\noindent\textit{Step~4: Solving for $\tau_s^*$ and $\tau_c^*$.}
Applying the quadratic formula to~\eqref{eq:quadratic} with $A = u_{\max}/2$, $B = -\beta$, and $C = 2\beta^2/u_{\max} - 3(T\,v_{\max} - L)$, the discriminant is
\begin{align}
    \Delta &= \beta^2 - 2\,u_{\max}\!\left(\frac{2\beta^2}{u_{\max}} - 3(T\,v_{\max} - L)\right)  \nonumber \\
    &=6\,u_{\max}(T\,v_{\max} - L) - 3\beta^2 \nonumber \\
    &= \Psi.
\end{align}
The two roots are
\begin{equation}\label{eq:roots}
    \tau_s^{(\pm)} = \frac{\beta \pm \sqrt{\Psi}}{u_{\max}}.
\end{equation}
For the B--A--C profile to be well-posed, we require $0 < \tau_c^* < \tau_s^* \leq T$.
From~\eqref{eq:sum}, $\tau_c^* < \tau_s^*$ is equivalent to $\tau_s^* > \beta/u_{\max}$.
The smaller root $\tau_s^{(-)} = (\beta - \sqrt{\Psi})/u_{\max} < \beta/u_{\max}$ violates this requirement.
The larger root $\tau_s^{(+)} = (\beta + \sqrt{\Psi})/u_{\max} > \beta/u_{\max}$ satisfies it.
Therefore, $\tau_s^* = (\beta + \sqrt{\Psi})/u_{\max}$, and from~\eqref{eq:tau1_from_tau2}, $\tau_c^* = (\beta - \sqrt{\Psi})/u_{\max}$.
\end{proof}

\begin{remark}
The boundary case $\Psi = 0$ corresponds to $\tau_c^* = \tau_s^* = \beta/u_{\max}$, where the unconstrained arc collapses and the profile degenerates into a bang-coast structure.
\end{remark}

\begin{remark}\label{rem:comparison}
The junction points in Lemma~\ref{lem:BAC_junctions} provide a significant simplification over the expressions derived in Lemma~8 and Appendix~C of~\cite{mahbub2020Automatica-2}.
Therein, the junction points are obtained through the intermediate constants $a^{(2)} = -u_{\max}^2\sqrt{-1/\phi}$ and $b^{(2)} = u_{\max}(-2v^0\sqrt{-1/\phi} + 2v_{\max}\sqrt{-1/\phi} + 1)/2$, where $\phi$ is a function of five parameters.
Here, the junction points $\tau_c^*$ and $\tau_s^*$ are then recovered indirectly from these constants.
The resulting closed-form expressions~\eqref{eq:tau_c}--\eqref{eq:tau_s} are explicit functions of the boundary data $(v^0, L, T, u_{\max}, v_{\max})$, are symmetric in structure ($\tau_c^*$ and $\tau_s^*$ differ only in the sign before $\sqrt{\Psi}$), and do not involve any intermediate variables.
\end{remark}

\section{Concluding remarks}

In this paper, we studied the time–energy optimal control problem for double-integrator systems under state and control constraints. We provided a direct classification of admissible control profiles based on the boundary conditions, eliminating the need for sequential feasibility checks. We also simplified the structure of constrained trajectories. Specifically, for cases where both state and control constraints become active, we showed that the switching times can be computed via simple algebraic equations instead of cubic ones, reducing complexity. Furthermore, we derived closed-form expressions for all admissible profiles, including special pathological cases, enabling efficient and transparent trajectory generation. These results are particularly relevant for real-time applications such as CAVs, where vehicle onboard computational capacity is limited.
Future work should extend this framework to more complex dynamics and large-scale coordination settings.

\bibliographystyle{ieeetr}
\bibliography{bibliography,IDS_Publications_08212025}

\clearpage

\end{document}